\newtheorem{theo}{Theorem}[section]
\newtheorem{pr}[theo]{Proposition}
\newtheorem{lemma}[theo]{Lemma}
\newtheorem{defi}[theo]{Definition}
\newfont{\nset}{msbm10}
\newcommand{\ns}[1]{\mbox{\nset #1}}
\def\Z{\ns Z}
\def\Z{\ns{Z}}
\def\dg{\delta}
\def\dist{\partial}
\def\ecc{\mathop{\varepsilon}\nolimits}
\def\mod{\mathop{\rm mod}\nolimits}
\def\>{\mathop{\rightarrow}\nolimits}
\def\p{\mbox{\boldmath $p$}}
\def\r{\mbox{\boldmath $r$}}
\def\u{{\mbox {\boldmath $u$}}}
\def\x{\mbox{\boldmath $x$}}
\def\y{\mbox{\boldmath $y$}}
\def\z{\mbox{\boldmath $z$}}
\def\vecalpha{\mbox{\boldmath $\alpha$}}
\def\vec0{\mbox{\bf 0}}
\title{Deterministic Hierarchical Networks
\thanks{Corresponding author: M.A. Fiol, Dept. Matem\`{a}tica Aplicada IV,
Universitat Polit\`{e}cnica de Catalunya, Jordi Girona 1-3,
08034 Barcelona (Catalonia), Tel: +34 93
4015993, Fax: +34 93 4015981, e-mail: {\tt fiol@ma4.upc.edu}} }
\author{L. Barri\`ere, F. Comellas, C. Dalf\'o, M.A. Fiol\\
{\small Departament de Matem\`atica Aplicada IV }\\
{\small Universitat Polit\`ecnica de Catalunya }\\
{\small Barcelona (Catalonia) }\\
{\small {\tt \{lali,comellas,cdalfo,fiol\}@ma4.upc.edu}}}
\date{}
\begin{document}
\maketitle

\begin{abstract}
It has been shown that many networks associated with
complex systems are small-world (they have both a large local
clustering coefficient and a small diameter) and they are also
scale-free (the degrees are distributed according to a power law).
Moreover, these networks are very often hierarchical, as they
describe the modularity of the systems that are modeled. Most of
the studies for complex networks are based on stochastic methods.
However, a deterministic method, with an exact determination of the
main relevant parameters of the networks, has proven useful. Indeed,
this approach complements and enhances the probabilistic and
simulation techniques and, therefore, it provides a better
understanding of the systems modeled.
In this paper we find the radius, diameter, clustering coefficient
and degree distribution of a generic family of deterministic
hierarchical small-world scale-free networks that has been
considered for modeling real-life complex systems.
\end{abstract}




\noindent{\em Keywords:} Hierarchical network; Small-word; Scale-free; Degree;
Diameter; Clustering.

\section{Introduction}
With the publication in 1998 and 1999  of  the papers by Watts and
Strogatz on small-world networks~\cite{WaSt98} and by Barab\'asi and
Albert on scale-free networks~\cite{BaAl99}, there has been a
renewed interest in the study of networks associated to complex
systems that has received a considerable boost as an
interdisciplinary subject.

Many real-life networks,  transportation and communication systems
(including the power distribution  and telephone networks),
Internet~\cite{FaFaFa99},  World Wide Web~\cite{AlJeBa99},
and several social and biological
networks~\cite{JeToAlOlBa00,JeMaBaOl01,Ne01}, belong to a class of
networks known as small-world scale-free networks. All these
networks exhibit both a strong local clustering coefficient (nodes
have many mutual neighbors) and a small diameter. Another important
characteristic is that the number of links attached to the nodes
usually obeys a power law distribution (`scale-free' network).
Several authors also noticed that the modular structure of a network
can be characterized by a specific clustering distribution that
depends on the degree. The network is then called
hierarchical~\cite{RaSoMoOlBa02,SoVa04,WuRaBa03}. Moreover, with the
introduction of a new measuring technique for graphs, it has been
discovered that many real networks can also be categorized as
self-similar, see \cite{SoHaMa05}.

Along with these observational studies, researchers have developed
different models~\cite{AlBa02,DoMe02,Ne03}, most of them stochastic,
which should help to understand and predict the behavior and
characteristics of complex systems. However, new deterministic
models constructed by recursive methods, based on the existence of
`cliques' (clusters of nodes linked to each other), have also been
introduced \cite{BaRaVi01,CoFeRa04,DoGoMe02,JuKiKa02,ZhCoFeRo06}.
Such deterministic models have the advantage that they allow one to
analytically compute relevant properties and parameters, which may
be compared with data  from real and simulated networks. In
\cite{BaRaVi01}, Barab\'asi {\it et al.} proposed a simple
hierarchical family of deterministic networks and showed it had a
small-world scale-free nature. However, their null clustering coefficient
of all the vertices (the clustering coefficient of a vertex
is defined as the number of edges between the neighbors of this
vertex divided by the number of all possible edges between these
neighbors) contrasts with many real networks that have a high
clustering coefficient. Another family of hierarchical networks is
proposed in ~\cite{RaSoMoOlBa02}. It combines a modular structure
with a scale-free topology and models the metabolic networks of
living organisms and networks associated with generic system-level
cellular organizations. A simple variation of this hierarchical
network is considered in~\cite{RaBa03}, where other modular networks
(as WWW, the actor network, Internet at the domain level, etc.) are
studied. This model is further generalized in~\cite{No03}.

Several authors ~\cite{BaOl04,RaBa03,RaSoMoOlBa02} claim that a signature for a hierarchical network
on top of the small-world scale-free characteristics
is that the clustering of the vertices of the graph follows $C(k_i) \propto 1/k_i$, where $k_i$ is the degree of vertex $i$.

In this paper, we  study a family of hierarchical networks
recursively defined from an initial complete graph on $n$ vertices.
We find some of the main properties for this family:
radius, diameter, and degree and clustering distributions.

\section{The hierarchical graph  $H_{n,k}$}
In this section we generalize the constructions of deterministic
hierarchical graphs introduced by Ravasz {\em et
al.}~\cite{RaBa03,RaSoMoOlBa02} and Noh~\cite{No03}. Roughly
speaking, these graphs are constructed first by connecting a
selected root vertex of a complete graph $K_n$ to some vertices of
$n-1$ replicas of $K_n$, and establishing also some edges between
such copies of $K_n$. This gives a graph with $n^2$ vertices. Next,
$n-1$ replicas of the new whole structure are added, again with some
edges between them and to the same root vertex. At this step the
graph has $n^3$ vertices. Then we iterate the process until, for some integer $k\ge 1$,  the desired graph order $n^k$ is reached
(see below for a formal definition).
Our model enhances the modularity and self-similarity of the graph
obtained, and allows us to derive
exact expressions for the radius, diameter, degree and clustering
distributions.

\subsection{Definition, order and size}
Next we provide a recursive formal definition of the proposed family
of graphs, characterized by the parameters $n\ge 2$ (order of the
initial complete graph) and $k\ge 1$ (number of iterations or
dimension). This allows us to give also a direct definition and
derive an expression for the number of edges (the radius and the
diameter will be studied in the next section).


\begin{defi}
Let $n$ and $k$ be positive integers, $n\geq 2$. The hierarchical graph
$H_{n,k}$ has vertex set $V_{n,k}$, with $n^k$ vertices,
denoted by the $k$-tuples $x_1 x_2 x_3\ldots x_{k}$, $ x_i \in
\Z_{n}, 1\leq i\leq k$, and edge set $E_{n,k}$ defined recursively
as follows:
\begin{itemize}
\item
$H_{n,1}$ is the complete graph $K_{n}$.
\item
For  $k>1$,  $H_{n,k}$ is obtained from the union of $n$ copies of
$H_{n,k-1}$, each denoted by $H_{n,k-1}^{\alpha}$, $0\le \alpha\le n-1$,
and with vertices $x_2^{\alpha} x_3^{\alpha}\ldots
x_k^{\alpha}\equiv {\alpha} x_2 x_3\ldots x_{k}$, by adding the
following new edges $($where adjacencies are denoted by `$\sim$'$)$:
\begin{eqnarray}
\label{adj2'}
000\ldots 00  & \ \sim \ &  x_1 x_2 x_3\ldots x_{k-1}
x_{k}, \qquad    x_j\neq 0,\  1\leq j \leq k; \\
\label{adj3'}
x_{1}00\ldots 00 & \ \sim  \ &  y_{1}00\ldots00,
\qquad x_{1},y_{1}\neq 0,\ x_{1} \neq y_{1}.
\end{eqnarray}
\end{itemize}
\end{defi}

Alternatively, a direct definition of the edge set $E_{n,k}$ is
given by the following adjacency rules (when $i=0$, then $x_1 x_2
\ldots x_i$ is the empty string): \vskip-1cm
\begin{eqnarray}
\label{adj1}
x_1 x_2 \ldots x_k & \ \sim \ & x_1 x_2 \ldots x_{k-1} y_k, \quad y_k\neq x_k; \\
 \nonumber x_1 x_2 \ldots x_i 00 \ldots 0 &\ \sim \ & x_1 x_2
\ldots x_i x_{i+1} x_{i+2} \ldots x_k, \\
 & & \label{adj2} \hskip 2cm x_{j} \neq 0,\ i+1\leq j \leq k,\ 0\le i\le k-2;\\
 \nonumber  x_1 x_2 \ldots x_i 00 \ldots 0 &\ \sim \ & x_1 x_2
\ldots x_{i-1} y_i 00 \ldots 0, \\
  & & \label{adj3} \hskip 2cm  x_i,y_i\neq 0,\ y_i\neq x_i,\
1\le i\le k-1.
\end{eqnarray}
Notice that both conditions (\ref{adj2'}) and (\ref{adj3'}) of the
recursive definition correspond to (\ref{adj2}) with $i=0$, and
(\ref{adj3}) with $i=1$, respectively.

To illustrate our construction, Fig.~\ref{eldibu} shows the
hierarchical graphs $H_{4,k}$, for $k=1,2,3$. The following result
gives the number of edges of  $H_{n,k}$, which can be easily
computed by using the recursive definition.

\begin{center}
\begin{figure}[t]
\includegraphics[width=15cm]{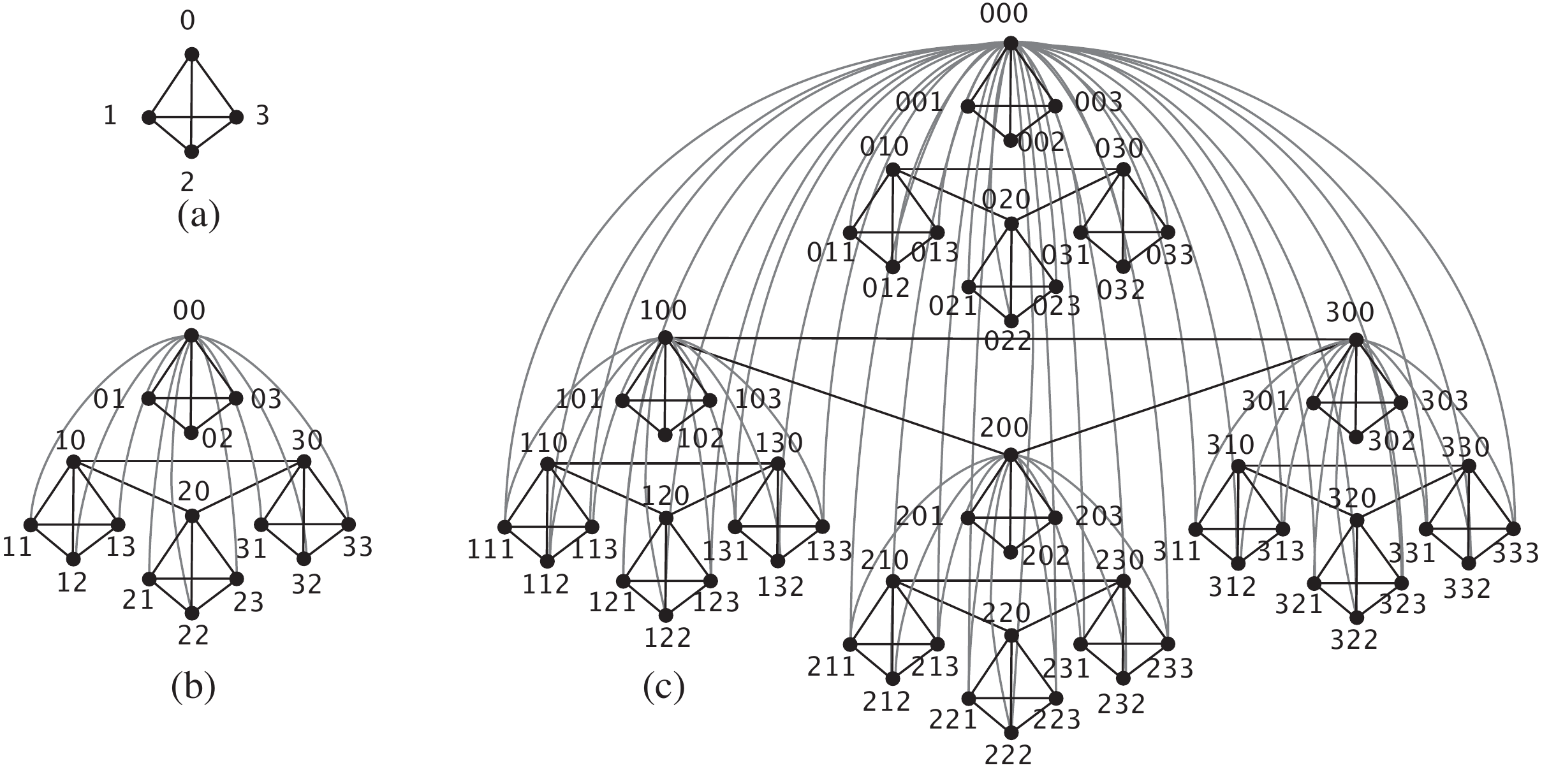}
\caption{Hierarchical graphs with initial order $4$: (a) $H_{4,1}$,
(b) $H_{4,2}$, (c) $H_{4,3}$. } \label{eldibu}
\end{figure}
\end{center}

\begin{pr}
\label{size} The size of $H_{n,k}$ is
\begin{equation}
\label{num-edges}
\mathcal{j}E_{n,k}\mathcal{j}=\frac{3}{2}n^{k+1}-(n-1)^{k+1}-2n^k-\frac{n}{2}+1.
\end{equation}
\end{pr}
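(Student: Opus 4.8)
The plan is to derive a first-order linear recurrence for $e_k := |E_{n,k}|$ directly from the recursive construction and then solve it, checking that the closed form agrees with (\ref{num-edges}).

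First I would count the edges created at the $k$-th step. By definition $H_{n,k}$ consists of the $n$ disjoint copies $H_{n,k-1}^{\alpha}$ ($0\le\alpha\le n-1$) together with the new edges (\ref{adj2'}) and (\ref{adj3'}). The copies contribute $n\,e_{k-1}$ edges. Rule (\ref{adj2'}) joins the root $00\ldots0$ to every vertex $x_1x_2\ldots x_k$ whose coordinates are all nonzero; there are exactly $(n-1)^k$ such vertices, hence $(n-1)^k$ new edges. Rule (\ref{adj3'}) places a clique on the $n-1$ \emph{subroots} $x_100\ldots0$ with $x_1\neq0$, contributing $\frac{(n-1)(n-2)}{2}$ edges. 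The one point that needs care is disjointness: every edge internal to a copy keeps the first coordinate fixed, whereas both (\ref{adj2'}) and (\ref{adj3'}) edges join two different copies, and among these the (\ref{adj2'}) edges are precisely the ones incident to the root $00\ldots0$. Hence the new edges are distinct from one another and from the $n\,e_{k-1}$ copy edges, and nothing is counted twice. This gives
\[
e_k = n\,e_{k-1} + (n-1)^k + \frac{(n-1)(n-2)}{2}, \qquad e_1 = \frac{n(n-1)}{2}.
\]

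Then I would solve the recurrence by superposition. The homogeneous solution is $A\,n^k$; for the forcing term $(n-1)^k$ a particular solution is $-(n-1)^{k+1}$ (substituting $B(n-1)^k$ gives $B(n-1)=nB+(n-1)$, so $B=-(n-1)$); and for the constant term $\frac{(n-1)(n-2)}{2}$ a constant particular solution is $\frac{2-n}{2}$. Imposing $e_1=\frac{n(n-1)}{2}$ determines $A=\frac{3n-4}{2}$, and rewriting $\frac{3n-4}{2}n^k=\frac{3}{2} n^{k+1}-2n^k$ turns the sum into
\[
e_k = \frac{3}{2}n^{k+1} - (n-1)^{k+1} - 2n^k - \frac{n}{2} + 1,
\]
which is precisely (\ref{num-edges}).

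The main obstacle lies entirely in the first paragraph: correctly enumerating the new edges, above all obtaining the count $(n-1)^k$ from (\ref{adj2'}) and verifying that the copy-internal edges, the (\ref{adj2'}) edges, and the (\ref{adj3'}) clique do not overlap. Once the recurrence is pinned down, the rest is mechanical. As a safeguard against an off-by-one in the particular solutions or the initial condition, I would check both the recurrence and the closed form on the smallest instances, e.g.\ $n=4$ with $k=1,2$, where (\ref{num-edges}) should return $6$ and $36$.
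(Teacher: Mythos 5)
Your proof is correct and follows essentially the same route as the paper: both derive the recurrence $|E_{n,k}|=n|E_{n,k-1}|+(n-1)^k+\binom{n-1}{2}$ with $|E_{n,1}|=\binom{n}{2}$ from the new edges introduced by (\ref{adj2'}) and (\ref{adj3'}). The only difference is that you solve the recurrence by undetermined coefficients while the paper unrolls it into geometric sums; your explicit check that the new edges are pairwise distinct and disjoint from the copy-internal edges is a small bonus the paper leaves implicit.
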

\begin{proof}
When constructing $H_{n,k}$ from $n$ copies of $H_{n-1,k}$, the
adjacencies (\ref{adj2'}) and (\ref{adj3'}) introduce $(n-1)^k$  and
${n-1\choose 2}$ new edges, respectively. Therefore,
$$
|E_{n,k}|=n |E_{n,k-1}|+(n-1)^k + {n-1\choose 2}. $$ By applying
recursively this formula and taking into account that
$|E_{n,1}|={n\choose 2}$, we get
\begin{equation}
\label{num-edges} |E_{n,k}|=n^{k-1} {n\choose 2} +
\sum_{i=2}^{k}n^{k-i}(n-1)^i + {n-1\choose 2}\sum_{i=0}^{k-2}n^i,
\end{equation}
which yields the result.
\end{proof}

\subsection{Hierarchical properties}
\label{hierarquicalprop}

The hierarchical properties of the graphs $H_{n,k}$ are summarized
by the following facts, that are a direct consequences of the
definition:
\begin{itemize}
\item[$(a)$]
According to (\ref{adj1}), for each sequence of fixed values
$\alpha_i\in \Z_n$, $1\leq i\leq k-1$, the vertex set  $\{\alpha_1
\alpha_2\ldots \alpha_{k-1}x_k :
x_k\in\Z_n\}$ induces a subgraph isomorphic to $K_n$. 
\medskip
\item[$(b)$]
Vertex $\r:=00\ldots 0$, which we distinguish and call \emph{root},
is adjacent by (\ref{adj2}) to vertices $x_1 x_2\ldots x_{k}$,
$x_i\neq 0$, for all $1\leq i \leq k $, which we call
\emph{peripheral}.
\medskip
\item[$(c)$]
For every $i$, $1\le i \le k-1$, $H_{n,k}$ can be decomposed into
$n^i$ vertex-disjoint subgraphs isomorphic to $H_{n,k-i}$. Each of
such (induced) subgraphs is denoted by $H_{n,k-i}^{\mbox{\scriptsize
$\vecalpha$}}$ and has vertex labels  $\vecalpha
x_{i+1}x_{i+2}\ldots x_k$, with $\vecalpha=\alpha_1\alpha_2 \ldots
\alpha_i \in \Z_n^i$ being a fixed sequence. In particular, for
$i=1$, $H_{n,k}$ has $n$ subgraphs $H_{n,k-1}^\alpha$,
$\alpha=0,1,\ldots, n-1$, as stated in the recursive definition.
\medskip
\item[$(d)$]
The root vertex of the subgraph $H_{n,k-i}^{\mbox{\scriptsize
$\vecalpha$}}$ is $\vecalpha \underbrace{00\ldots0}_{k-i}$. Thus,
the total number of root vertices of all the subgraphs, including
the one in $H_{n,k}$, is
\begin{equation}
\label{num-roots}
 1+(n-1)\sum_{i=1}^{k-1} n^{i-1}=n^{k-1},
\end{equation}
as expected since a given vertex $x_1x_2\ldots x_k$ is a root (of
some subgraph) if and only if $x_k=0$.
\item[$(e)$]
The peripheral vertices of the subgraph
$H_{n,k-i}^{\mbox{\scriptsize $\vecalpha$}}$ are of the form
$\vecalpha x_{i+1}x_{i+2}\ldots x_{k}$, where $x_j\neq 0$, $i+1\le
j\le k$. Thus, the total number of peripheral vertices of all the
subgraphs, including those in $H_{n,k}$, see $(b)$, is
\begin{equation}
\label{num-peripheral}
(n-1)^k+(n-1)\sum_{i=1}^{k-1}
n^{i-1}(n-1)^{k-i}=n^{k-1}(n-1),
\end{equation}
as expected since  $x_1x_2\ldots x_k$ is a peripheral vertex (of
some subgraph) if and only if $x_k\neq 0$. Note that, adding up
(\ref{num-roots}) and (\ref{num-peripheral}), we get $n^k=|V_{n,k}|$,
so that every vertex of $H_{n,k}$ is a root or peripheral of some subgraph
isomorphic to $H_{n,k'}$, $1\le k'\le k$.
\medskip
\item[$(f)$]
By collapsing in $H_{n,k}$ each of the  $n^i$  subgraphs
$H_{n,k-i}^{\mbox{\scriptsize $\vecalpha$}}$, $\vecalpha\in\Z_n^i$,
into a single vertex and all multiple edges into one, we obtain a
graph isomorphic to $H_{n,i}$.
\medskip
\item[$(g)$]
According to (\ref{adj3}), for every fixed $i$, $1\le i\le k$, and
given a sequence $\vecalpha\in\Z_n^{i-1}$, there exist all possible
edges among the $n-1$ vertices labeled $\vecalpha x_i00\ldots0$ with
$x_i \in \Z_n^*=\{1,2,\ldots,n-1\}$, that is, the root vertices of
$H_{n,k-i}^{\mbox{\scriptsize $\vecalpha$} x_i}$. Thus, these edges
induce a complete graph isomorphic to $K_{n-1}$.
\end{itemize}

\section{Radius and Diameter}

In this section we determine the radius and diameter of $H_{n,k}$
by using a recursive method. With this aim,
let us first introduce some notation concerning $H_{n,k}$. Let
$\dist_k(\x,\y)$ denote the distance between vertices $\x,\y\in
V_{n,k}$ in $H_{n,k}$; and $\dist_k(\x,U):=\min_{\u\in
U}\{\dist_k(\x,\u)\}$. Let $\r^{\alpha}=\alpha 00\ldots0$ be the root
vertex of $H_{n,k-1}^{\alpha}$, $\alpha\in \Z_n$ (as stated before,
$\r$ stands for the root vertex of $H_{n,k}$). Let $P$ and
$P^\alpha$, $\alpha\in \Z_n$, denote the set of peripheral vertices
of $H_{n,k}$ and $H_{n,k-1}^\alpha$, respectively.

\begin{pr}
Let $r_k,\ecc_k(\r),D_k$ denote, respectively, the radius, the
eccentricity of the root $\r$, and the diameter of $H_{n,k}$. Then,
\begin{enumerate}
\item[$(a)$]
$r_k=\ecc_k(\r)=k$.
\item[$(b)$]
$D_k=2k-1$.
\end{enumerate}
\end{pr}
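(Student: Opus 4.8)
The plan is to exploit the recursive gluing of the $n$ copies $H_{n,k-1}^{\alpha}$ and, above all, to pin down exactly where one copy touches the rest of the graph. First I would record a structural lemma read off from the adjacency rules (\ref{adj1})--(\ref{adj3}): the only edges joining the copy $H_{n,k-1}^{0}$ containing $\r$ to the remaining copies are the edges from $\r$ to the peripheral vertices given by (\ref{adj2}) with $i=0$, so that $\r$ is a cut vertex separating $H_{n,k-1}^{0}\setminus\{\r\}$ from the rest. For a copy $H_{n,k-1}^{\alpha}$ with $\alpha\neq 0$, its only contacts with the outside are through its peripheral set $P^{\alpha}$ (each such vertex being adjacent to $\r$ and to nothing else outside the copy) and through its root $\r^{\alpha}$; and by fact $(g)$ with $i=1$ the roots $\r^{1},\dots,\r^{n-1}$ induce a clique $K_{n-1}$. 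These interface facts are what let me replace distances in $H_{n,k}$ by distances inside a copy, i.e.\ by distances in $H_{n,k-1}$, and they are the crux on which everything rests.

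For part $(a)$ I would run a coupled induction on two quantities, $e_k:=\ecc_k(\r)$ and $\mu_k:=\max_{\x\in V_{n,k}}\dist_k(\x,P)$, the largest distance from any vertex to the peripheral set. The recursion I expect is $e_k=\max\{e_{k-1},\,1+\mu_{k-1}\}$ and $\mu_k=\max\{\mu_{k-1},\,1+e_{k-1}\}$, with base case $e_1=\mu_1=1$, which at once gives $e_k=\mu_k=k$. The two halves of each recursion come from the interface lemma: a vertex of $H_{n,k-1}^{0}$ is reached from $\r$ by a path confined to that copy (cut-vertex property), contributing $e_{k-1}$; a vertex of a copy $H_{n,k-1}^{\alpha}$ with $\alpha\neq0$ is reached optimally by first stepping from $\r$ to a nearest peripheral vertex of the copy and then travelling inside, contributing $1+\mu_{k-1}$ (entering instead through $\r^{\alpha}$ costs $2+\dist_{k-1}(\r,\x')$, which is never smaller because $\r$ is adjacent to $P$, so $\dist_{k-1}(\x',P)\le 1+\dist_{k-1}(\r,\x')$). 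A symmetric argument, using that a vertex of $H_{n,k-1}^{0}$ can only reach $P$ through $\r$, yields the recursion for $\mu_k$. The delicate point, and the step I expect to demand the most care, is verifying that entering a copy and leaving it again can never shorten a path, so that these ``enter once and stay'' routes are genuinely optimal; this is exactly where the interface lemma is used.

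For part $(b)$ I would bound $D_k=\max_{\x,\y}\dist_k(\x,\y)$ by a case analysis on which copies contain $\x$ and $\y$. If both lie in one copy, the distance is at most $D_{k-1}=2k-3$ by induction; if both lie in $H_{n,k-1}^{0}$ the cut vertex $\r$ again confines a shortest path, giving at most $D_{k-1}$. If $\x\in H_{n,k-1}^{0}$ and $\y$ lies in a copy $\alpha\neq0$, every path crosses $\r$, so $\dist_k(\x,\y)=\dist_k(\r,\x)+\dist_k(\r,\y)\le (k-1)+k=2k-1$, using $\dist_k(\r,\x)\le e_{k-1}=k-1$. If $\x$ and $\y$ lie in distinct copies $\alpha,\beta\neq0$, concatenating a path from $\x$ to $\r^{\alpha}$ inside its copy, the edge $\r^{\alpha}\sim\r^{\beta}$, and a path from $\r^{\beta}$ to $\y$ gives length at most $(k-1)+1+(k-1)=2k-1$. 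Hence $D_k\le 2k-1$. For the matching lower bound I would take $\x\in H_{n,k-1}^{0}$ with $\dist_k(\r,\x)=k-1$ and a vertex $\y$ with $\dist_k(\r,\y)=k$, which exists and necessarily lies in a copy $\alpha\neq0$ because copy $0$ attains only $k-1$; being separated by the cut vertex $\r$, the pair satisfies $\dist_k(\x,\y)=(k-1)+k=2k-1$.

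Finally, for the radius I would simply combine $r_k\le\ecc_k(\r)=k$ from part $(a)$ with the general inequality $r_k\ge\lceil D_k/2\rceil=\lceil(2k-1)/2\rceil=k$ coming from part $(b)$. This forces $r_k=\ecc_k(\r)=k$ and shows, in passing, that $\r$ is a center of $H_{n,k}$.
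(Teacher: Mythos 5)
Your proof is correct, and while its backbone (recursing on the $n$ copies of $H_{n,k-1}$ and jointly tracking the two quantities $\dist_k(\cdot,\r)$ and $\dist_k(\cdot,P)$) is the same as the paper's more detailed ``alternative'' proof via Lemmas \ref{property}--\ref{lemma:diam-up} and Proposition \ref{prop:diameter}, you diverge from it in three worthwhile ways. First, you isolate the interface structure explicitly -- $\r$ is a cut vertex separating $V_{n,k-1}^0\setminus\{\r\}$ from the rest, and a nonzero copy meets the outside only through $P^{\alpha}\cup\{\r^{\alpha}\}$ -- whereas the paper buries exactly this fact inside the unproved distance formulas of Lemma \ref{property}; your observation that the interface vertices are mutually close (so detours through another copy cannot shorten a path, via $\dist_{k-1}(\x',P)\le 1+\dist_{k-1}(\r,\x')$) is the honest justification of those formulas and is no less rigorous than what the paper offers. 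Second, for the lower bound $D_k\ge 2k-1$ you use additivity of distance through the cut vertex $\r$, picking any $\x$ in copy $0$ at distance $k-1$ from $\r$ and any $\y$ at distance $k$; the paper instead exhibits the explicit witness pair $\z^{01},\z^{10}$ and tracks them through the recursion. Both work; yours avoids bookkeeping, the paper's gives concrete antipodal vertices. Third, and most notably, your derivation of the radius from $r_k\le\ecc_k(\r)=k$ together with the general inequality $r_k\ge\lceil D_k/2\rceil=k$ actually supplies an argument the paper omits entirely: the paper simply asserts $r_k=\ecc_k(\r)$ in part $(a)$ without ruling out a vertex of smaller eccentricity, so on this point your proof is strictly more complete than the source.
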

\begin{proof}
\begin{enumerate}
\item[$(a)$]
The radius of $H_{n,k}$ coincides with the eccentricity of the root:
$r_k=\ecc_k(\r)=k$.
\item[$(b)$]
By induction on $k$.
\\
For $k=1$: As $H_{n,1}=K_n$, then $D_1=1$.
\\
Assume that, for some fixed $k>1$, $D_k=2k-1$.
\\
Then, for $k'=k+1$: As $H_{n,k'}$ is made from $n$ copies of $H_{n,k}$
(called copy 0, copy 1,\ldots, copy $n-1$), two further vertices
in $H_{n,k'}$ must be in different copies of $H_{n,k}$. If none of
these two vertices is in the copy $0$ of $H_{n,k}$, then both
copies are joined by their roots. Then, the diameter of $H_{n,k'}$
is:
$$
D_{k'} = \ecc_k(\r)+\ecc_k(\r)+1 = 2k+1 = 2k'-1,
$$
where $\r$ is the root of any of the two copies of $H_{n,k}$. On
the other hand, if one of the two vertices is  in the copy $0$ of $H_{n,k}$,
then both copies are joined from the root of the copy 0 to the
peripheral vertices of the other copy of $H_{n,k}$. Then, the
diameter of $H_{n,k'}$ is:
$$
D_{k'} = \ecc_k(\r)+\ecc_k(\p)+1 = 2k+1 = 2k'-1,
$$
where $\p$ is one of the peripheral vertices of the non-zero copy of
$H_{n,k}$, and $\ecc_k(\p)=k$.
\end{enumerate}
\end{proof}

Then, from the result on the diameter and property $(c)$ in
Subsection \ref{hierarquicalprop}, we have that the distance between
two vertices $\x$ and $\y$  of $H_{n,k}$, with maximum common prefix of
length $i=|\x\cap\y|$, satisfies
$$
\dist(\x,\y)\le 2(k-i)-1.
$$

Alternatively, we can give recursive proofs of these results.
Indeed, let us consider the case of the diameter. With this aim, we
first give the following result that follows from the recursive
definition of $H_{n,k}$:

\begin{lemma}
\label{property} Let  $\x$ and $\y$ be two vertices in $H_{n,k}$, $k>1$.
Then, depending on the subgraphs $H_{n,k-1}$  where such vertices
belong to, we are in one of the following three cases:
\begin{enumerate}
\item[$(a)$]
If $\x,\y\in V_{n,k-1}^{\alpha}$ for some $\alpha\in \Z_{n}$, that is,
$\x=\alpha\x'$ and $\y=\alpha\y'$, then,
$$
\dist_k(\x,\y)=\dist_{k-1}(\x',\y').
$$
\item[$(b)$]
If $\x\in V_{n,k-1}^0$ and $\y\in V_{n,k-1}^{\alpha}$ for some
$\alpha\in \Z_{n}^*$, that is $\x=0\x'$, $\y=\alpha\y'$, with
$\alpha\not=0$, then,
$$
\dist_k(\x,\y)=\dist_{k-1}(\x',\r^0)+1+\dist_{k-1}(\y',P^{\alpha}).
$$
\item[$(c)$]
If $\x\in V_{n,k-1}^{\alpha}$ and $\y\in V_{n,k-1}^{\beta}$ for some
$\alpha,\beta\in \Z_{n}^*$, $\alpha\neq \beta$, that is
$\x=\alpha\x'$, $\y=\beta\y'$, with $\alpha,\beta\not=0$, then,
$$
\dist_k(\x,\y)=\min\{
\dist_{k-1}(\x',P^\alpha)+2+\dist_{k-1}(\y',P^\beta),\,\dist_{k-1}(\x',\r^\alpha)+1+\dist_{k-1}(\r^\beta,\y')\}.
$$
\end{enumerate}
\end{lemma}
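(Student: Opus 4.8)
The plan is to reduce every distance in $H_{n,k}$ to distances inside the copies $H_{n,k-1}^\alpha$ by exploiting how restrictively these copies are glued together. \emph{First} I would read off from the direct adjacency rules (\ref{adj1})--(\ref{adj3}) precisely which edges join two different copies. Rule (\ref{adj1}), and rules (\ref{adj2}) and (\ref{adj3}) with $i\ge 1$, all fix the first coordinate and so are internal to a single copy; the only inter-copy edges are those of (\ref{adj2}) with $i=0$, joining the root $\r^0$ to the peripheral vertices of the non-zero copies, and those of (\ref{adj3}) with $i=1$, joining the roots $\r^\alpha$, $\alpha\in\Z_n^*$, into a clique $K_{n-1}$. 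In particular $\r^0$ is the only vertex of copy $H_{n,k-1}^0$ meeting an edge that leaves that copy, so $\r^0$ is a cut vertex separating $H_{n,k-1}^0$ from the rest of $H_{n,k}$.

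With this edge list in hand, and recalling that each copy $H_{n,k-1}^\alpha$ is an induced subgraph isomorphic to $H_{n,k-1}$, I would prove each equality as two matching inequalities, the upper bounds being constructive. For $(a)$, a geodesic of $H_{n,k-1}$ inside the common copy is a path of $H_{n,k}$, giving $\dist_k(\x,\y)\le\dist_{k-1}(\x',\y')$. For $(b)$, I concatenate a geodesic from $\x'$ to $\r^0$ in copy $0$, the crossing edge from $\r^0$ to a nearest peripheral vertex $\p$ of copy $\alpha$, and a geodesic from $\p$ to $\y'$. For $(c)$, the two routes named in the statement — through $\r^0$ using its two incident peripheral edges, or across the single edge $\r^\alpha\sim\r^\beta$ using the two roots — furnish the two quantities whose minimum bounds $\dist_k(\x,\y)$ from above.

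For the lower bounds I would argue that no geodesic beats these routes, again via the bottleneck structure: a walk from $\x$ to $\y$ that changes copy must use an inter-copy edge, and these occur only at $\r^0$ or between two non-zero roots. In case $(a)$ with $\alpha=0$ this is immediate, since $\r^0$ is a cut vertex: a walk leaving copy $0$ must return through $\r^0$, and the excursion between the two visits can be deleted, so some geodesic stays inside copy $0$. For $(b)$, a geodesic is forced out of copy $0$ through $\r^0$ and thence directly into a peripheral vertex of copy $\alpha$ (entering any other copy first only lengthens the path), which pins down the formula. For $(c)$, passage between two non-zero copies either crosses $\r^\alpha\sim\r^\beta$ once or runs through $\r^0$ by two peripheral edges; any longer itinerary, such as $\r^\alpha\sim\r^\gamma\sim\r^\beta$ or a detour inside copy $0$, is strictly longer, which forces the stated minimum.

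The hard part will be the lower bound of case $(a)$ for a non-zero copy, where $\r^\alpha$ is no longer a cut vertex and a walk may leave copy $\alpha$ through several distinct ports — its peripheral vertices as well as its root. To close this I would show that for any two such ports the distance through the rest of $H_{n,k}$ is at least their distance inside the copy: for example two peripheral vertices of copy $\alpha$ lie at inside-distance at most $2$ (via $\r^\alpha$), while outside they are joined only through $\r^0$, again at distance $2$, so the outside route is never shorter. Each outside excursion between two ports can then be replaced by an inside path of no greater length, forcing a geodesic to remain within copy $\alpha$. Carrying out this port-by-port comparison, together with the exclusion of multi-hop routes in $(c)$, is the routine but delicate core of the proof; everything else follows mechanically from the inter-copy edge list established at the outset.
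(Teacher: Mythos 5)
Your argument is correct, and it is worth noting that the paper itself offers no proof of this lemma at all --- it is stated as something that ``follows from the recursive definition,'' so there is nothing to match your proposal against except that implicit claim. What you supply is precisely the missing content: the classification of the inter-copy edges (only $\r^0\sim\p$ for $\p$ peripheral in a non-zero copy, via (\ref{adj2}) with $i=0$, and $\r^\alpha\sim\r^\beta$ for $\alpha,\beta\in\Z_n^*$, via (\ref{adj3}) with $i=1$), the observation that $\r^0$ is a cut vertex isolating copy $0$, and --- the genuinely non-obvious point --- the port-by-port comparison needed for case $(a)$ when $\alpha\neq 0$. Your key inequality there is right: any two peripheral vertices of copy $\alpha$ are at distance at most $2$ inside the copy (both are adjacent to $\r^\alpha$ by (\ref{adj2})), while the shortest external excursion between them passes through $\r^0$ and also has length $2$; and $\r^\alpha$ is adjacent inside the copy to every peripheral vertex of the copy, while any external route between $\r^\alpha$ and such a vertex has length at least $4$. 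Hence every external excursion between two ports can be replaced by an internal path of no greater length, so some geodesic stays inside the copy. The exclusion of multi-hop itineraries in $(b)$ and $(c)$ also checks out, using $\dist_{k-1}(\y',P^\beta)\le \dist_{k-1}(\y',\r^\beta)+1$ to show that detours through a third copy or through extra root-to-root edges are never shorter. In short, your proposal is a complete and correct proof of a statement the paper leaves unproved.
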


\begin{lemma}
\label{lemma:diam-low} For any vertex $\x$ in $H_{n,k}$ we have:
$$
\dist_k(\x,\r)\le \left\{
\begin{array}{ll}
k-1 &  \mbox{\ \ if\ \ } \x=0\x',\\
k & \textrm{\ \  otherwise,}
\end{array}
\right. \quad \mbox{ and } \qquad \dist_k(\x,P)\le \left\{
\begin{array}{ll}
k & \mbox{\ \ if\ \ } \x=0\x',\\
k-1 & \textrm{\ \  otherwise.}
\end{array}
\right.
$$
\end{lemma}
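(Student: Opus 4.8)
The plan is to prove the two inequalities \emph{simultaneously} by induction on $k$, using the recursive distance formulas of Lemma~\ref{property} together with the hierarchical facts of Subsection~\ref{hierarquicalprop}; in particular the fact (property $(b)$) that the root $\r$ is adjacent to every peripheral vertex, so that $\dist_k(\r,P)=1$ and no peripheral vertex of $H_{n,k}$ lies in copy $0$. Throughout I would write $\x=x_1\x'$ with $\x'=x_2\ldots x_k$, and identify the copy $H_{n,k-1}^{x_1}$ with $H_{n,k-1}$ via $x_1\x'\mapsto\x'$, under which $\r^{x_1}$ corresponds to the root of $H_{n,k-1}$ and $P^{x_1}$ to its peripheral set $P$. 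Since Lemma~\ref{property} expresses each distance in $H_{n,k}$ exactly in terms of level-$(k-1)$ distances, replacing those sub-distances by the induction bounds immediately yields the desired upper bounds.

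For the base case $k=1$ the graph is $K_n$, $\r$ is the single vertex $0$ and $P=\Z_n^*$, so all four bounds (the distinct inequalities being $0\le0$ and $1\le1$) are immediate. For the inductive step I would split on whether $x_1=0$. If $x_1=0$, then $\x$ and $\r$ both lie in copy $0$, so Lemma~\ref{property}$(a)$ gives $\dist_k(\x,\r)=\dist_{k-1}(\x',\r)$, which the induction hypothesis bounds by $k-1$ (the sharper bound $k-2$ occurring when $x_2=0$). The triangle inequality through the root then yields $\dist_k(\x,P)\le\dist_k(\x,\r)+\dist_k(\r,P)\le(k-1)+1=k$, which is exactly the claimed bound for a vertex $0\x'$.

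If instead $x_1\neq0$, then $\x$ lies in a non-zero copy while $\r$ lies in copy $0$, so Lemma~\ref{property}$(b)$ applies with the copy-$0$ endpoint being $\r$ itself, contributing $\dist_{k-1}(\r,\r)=0$; this gives $\dist_k(\x,\r)=1+\dist_{k-1}(\x',P)$, and feeding in the \emph{peripheral} bound of the induction hypothesis ($\le k-1$ if $x_2=0$, $\le k-2$ otherwise) produces $\dist_k(\x,\r)\le k$. For the distance to $P$ in this case, the peripheral set $P^{x_1}$ of the host copy is contained in $P$ and lies in the same copy as $\x$, so Lemma~\ref{property}$(a)$ gives $\dist_k(\x,P)\le\dist_k(\x,P^{x_1})=\dist_{k-1}(\x',P)\le k-1$ directly from the induction hypothesis.

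The main obstacle is organizational rather than computational: the two inequalities are genuinely entangled, since the bound on $\dist_k(\x,\r)$ for $x_1\neq0$ consumes the \emph{peripheral} bound at level $k-1$, while the bound on $\dist_k(\x,P)$ for $x_1=0$ consumes the \emph{root} bound just established at level $k$. Hence neither statement can be proved in isolation, and the induction must carry both together, always choosing the copy-$0$ endpoint so that the global root $\r$ (or a vertex of $P$) is reached through a non-zero copy's peripheral vertex rather than through a copy root. The only real care needed is the secondary bookkeeping on whether $x_2=0$, which is precisely what makes the constants $k-1$ and $k$ close up.
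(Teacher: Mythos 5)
Your proposal is correct and follows essentially the same route as the paper: induction on $k$, splitting on whether $x_1=0$, and using the recursive reduction of $\dist_k(\x,\r)$ and $\dist_k(\x,P)$ to the level-$(k-1)$ root/peripheral distances (the paper states these reductions as exact equalities where you occasionally use inequalities, which is equally sufficient for the upper bounds). Your explicit remark about how the two bounds feed into each other across the induction is a correct reading of the same mechanism the paper leaves implicit in ``by the induction hypothesis, the lemma holds.''
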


\begin{proof}
By induction on $k$.
\\
Case $k=1$: If $\x=\mbox{\boldmath{$0$}}=\r$, then $\dist_1(\x,\r^0)=0$ and
$\dist_1(\x,P)=1$. Otherwise, $\x\in P=\Z_n^*$, and then
$\dist_1(\x,\r)=1$ and $\dist_1(\x,P)=0$.
\\
Case $k>1$: We observe that, from the recursive definition of
$H_{n,k}$,
$$
\dist_k(\x,\r)= \left\{
\begin{array}{ll}
\dist_{k-1}(\x',\r^0) & \textrm{\ \ if\ \ } \x=0\x',\\
\dist_{k-1}(\x',P^\alpha)+1  & \textrm{\ \ if $\x=\alpha\x'$ and
$\alpha\neq 0$},
\end{array}
\right.
$$
and
$$
\dist_k(\x,P)= \left\{
\begin{array}{ll}
\dist_{k-1}(\x',\r^0)+1& \textrm{\ \  if\ \  } \x=0\x',\\
\dist_{k-1}(\x',P^\alpha) & \textrm{\ \  if $\x=\alpha\x'$ and
$\alpha\neq 0$}.
\end{array}
\right.
$$
Then, by the induction hypothesis, the lemma holds.
\end{proof}

In the next result, $\z^{01}=0101\dots$ and $\z^{10}=1010\dots$
denote any vertex  $x_1x_2\ldots x_i\ldots$ of $H_{n,k}$ or
$H_{n,k-1}$, where $x_i\equiv i+1$ $(\mod 2)$ and $x_i\equiv i$
$(\mod 2)$, respectively.

\begin{lemma}
\label{lemma:diam-up} In $H_{n,k}$, the following equalities hold:
\begin{enumerate}
\item[$(a)$]
$\dist_k(\z^{01}, \r)  =  \dist_k(\z^{10},P) = k-1$,
\item[$(b)$]
$\dist_k(\z^{10},\r)  =  \dist_k(\z^{01}, P) =  k$.
\end{enumerate}
\end{lemma}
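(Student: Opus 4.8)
The plan is to prove this lemma by induction on $k$, using the recursive distance relations for $\dist_k(\x,\r)$ and $\dist_k(\x,P)$ already established in the proof of Lemma~\ref{lemma:diam-low}. The special vertices $\z^{01}$ and $\z^{10}$ are the alternating strings $0101\ldots$ and $1010\ldots$, so their recursive structure is particularly clean: deleting the first coordinate of $\z^{01}$ in $H_{n,k}$ yields a $\z^{10}$ vertex in $H_{n,k-1}$, and vice versa. This swap is exactly what drives the induction, since the recursion for $\dist_k(\x,\r)$ treats the cases $\x=0\x'$ (leading zero) and $\x=\alpha\x'$ with $\alpha\neq 0$ (leading nonzero) differently.

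First I would check the base case $k=1$ directly: the vertex $\z^{01}$ begins with $0$, so it is the root $\r$, giving $\dist_1(\z^{01},\r)=0=k-1$ and $\dist_1(\z^{01},P)=1=k$; the vertex $\z^{10}$ begins with $1$, so it is peripheral, giving $\dist_1(\z^{10},P)=0=k-1$ and $\dist_1(\z^{10},\r)=1=k$. For the inductive step, I would substitute into the recursion from Lemma~\ref{lemma:diam-low}. Since $\z^{01}=0\,\z'$ with $\z'$ a $\z^{10}$-type vertex of $H_{n,k-1}$, the recursion gives $\dist_k(\z^{01},\r)=\dist_{k-1}(\z^{10},\r^0)$, which by the inductive hypothesis part $(b)$ equals $(k-1)-1+1$... here I must be careful: the hypothesis gives $\dist_{k-1}(\z^{10},\r)=k-1$, so $\dist_k(\z^{01},\r)=k-1$, matching the claim. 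Symmetrically, $\z^{10}=\alpha\,\z'$ with $\alpha=1\neq 0$ and $\z'$ of type $\z^{01}$, so $\dist_k(\z^{10},\r)=\dist_{k-1}(\z^{01},P)+1$, which by hypothesis part $(a)$ is $(k-1)+1=k$. The two claims about distance to $P$ follow by the parallel computation using the second recursion.

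The main subtlety to get right is the bookkeeping of which alternating pattern appears after stripping the leading coordinate and which case of the recursion applies, together with the index shift in the eccentricity bound; I expect this is where sign-of-$k$ and off-by-one errors would creep in, so I would lay out all four quantities $\dist_k(\z^{01},\r)$, $\dist_k(\z^{10},\r)$, $\dist_k(\z^{01},P)$, $\dist_k(\z^{10},P)$ side by side at each step. The payoff is immediate: combined with Lemma~\ref{lemma:diam-low}, these equalities show the bounds there are tight, and in particular they exhibit the extremal vertices realizing the eccentricity $k$ of the root and hence the radius, completing the alternative recursive derivation of part $(a)$ of the radius–diameter proposition.
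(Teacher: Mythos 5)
Your proposal is correct and follows essentially the same route as the paper: induction on $k$, with the base case $H_{n,1}=K_n$ and the inductive step driven by the recursions for $\dist_k(\cdot,\r)$ and $\dist_k(\cdot,P)$ from the proof of Lemma~\ref{lemma:diam-low}, exploiting that stripping the leading symbol swaps $\z^{01}$ and $\z^{10}$. The only blemish is a mislabeled citation: the value $\dist_{k-1}(\z^{01},P)=k-1$ used in computing $\dist_k(\z^{10},\r)$ comes from part $(b)$ of the induction hypothesis, not part $(a)$.
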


\begin{proof}
By induction on $k$.
\\
Case $k=1$: $H_{n,k}$  is the complete graph $K_n$, and the result
clearly holds.
\\
Case $k>1$: From Lemma~\ref{property} we have:
\begin{enumerate}
\item[$(a)$] $\dist_k(\z^{01},\r)=\dist_{k-1}(\z^{10},\r^0)=k-1$,
\item[]
$\dist_k(\z^{10},P)=\dist_{k-1}(\z^{01}, P^0)=k-1$;

\item[$(b)$] $\dist_k(\z^{10},\r)=\dist_{k-1}(\z^{01}, P^1)+1=k$,
\item[]
$\dist_k(\z^{01},P)=\dist_{k-1}(\z^{10}, \r^0)+1=k-1+1=k$.
\end{enumerate}
\end{proof}

Now we can give the result about the diameter of $H_{n,k}$.
\begin{pr}
\label{prop:diameter}
The diameter of $H_{n,k}$ is $D_k=2k-1$.
\end{pr}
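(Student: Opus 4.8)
The plan is to establish the two inequalities $D_k\le 2k-1$ and $D_k\ge 2k-1$ separately, using the three preceding lemmas as the only tools. For the upper bound I would argue by induction on $k$, the base case $k=1$ being immediate since $H_{n,1}=K_n$ has diameter $1=2\cdot 1-1$. For the inductive step, take any two vertices $\x,\y\in V_{n,k}$ and split according to the three cases of Lemma~\ref{property}. In case $(a)$ the two vertices lie in a common copy $H_{n,k-1}^{\alpha}$, so $\dist_k(\x,\y)=\dist_{k-1}(\x',\y')\le D_{k-1}=2(k-1)-1<2k-1$ by the induction hypothesis. Cases $(b)$ and $(c)$ are handled directly, without reference to $D_{k-1}$: by Lemma~\ref{lemma:diam-low}, every vertex of $H_{n,k-1}$ lies within distance $k-1$ of both its root $\r$ and its peripheral set $P$. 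Hence the expression in $(b)$ is bounded by $(k-1)+1+(k-1)=2k-1$, and in $(c)$ the second branch of the minimum is likewise bounded by $(k-1)+1+(k-1)=2k-1$; since $\dist_k(\x,\y)$ never exceeds that branch, the bound follows. This gives $D_k\le 2k-1$.

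For the lower bound I would exhibit an explicit pair of vertices at distance exactly $2k-1$. The natural candidates are $\z^{01}$ and $\z^{10}$, for which Lemma~\ref{lemma:diam-up} was tailored. Writing $\z^{01}=0\z^{10}$ and $\z^{10}=1\z^{01}$ (the suffixes being the corresponding alternating vertices of $H_{n,k-1}$), these two vertices sit in copies $0$ and $1$, so Lemma~\ref{property}$(b)$ applies and gives
$$
\dist_k(\z^{01},\z^{10})=\dist_{k-1}(\z^{10},\r^0)+1+\dist_{k-1}(\z^{01},P^1).
$$
By Lemma~\ref{lemma:diam-up}$(b)$ applied in $H_{n,k-1}$, both distances on the right equal $k-1$, so $\dist_k(\z^{01},\z^{10})=(k-1)+1+(k-1)=2k-1$. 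Combining with the upper bound yields $D_k=2k-1$.

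I expect the main obstacle to be locating where the diameter is actually attained. It is tempting to seek an extremal pair inside case $(c)$ (two peripheral vertices in distinct non-zero copies), but a short check shows that the minimum there never exceeds $2k-2$: making one branch of the minimum large forces the other branch below $2k-1$. The decisive point is therefore that the diameter is realized in the configuration of case $(b)$, joining a vertex of copy $0$ that is far from the root $\r^0$ to a vertex of a non-zero copy that is far from its peripheral set $P^{\alpha}$ --- precisely the pair $\z^{01},\z^{10}$. Some care is also needed in the bookkeeping of the alternating labels under the recursive decomposition, to confirm that $\z^{01}$ begins with $0$ and $\z^{10}$ with $1$, so that case $(b)$ (and not $(a)$ or $(c)$) is the one that applies.
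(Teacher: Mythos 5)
Your proof is correct and follows essentially the same route as the paper: an induction for the upper bound that dispatches the three cases of Lemma~\ref{property} via Lemma~\ref{lemma:diam-low}, and the extremal pair $\z^{01},\z^{10}$ analysed through Lemma~\ref{property}$(b)$ and Lemma~\ref{lemma:diam-up} for the lower bound. You in fact supply the details of the lower-bound computation (and the observation that case $(c)$ cannot attain $2k-1$) that the paper leaves implicit.
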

\begin{proof}
First we prove by induction on $k$ that, for any given pair of
vertices of $H_{n,k}$, $\x$ and $\y$, we have $\dist_k(\x,\y)\le
2k-1$.
\\
Case $k=1$: The result trivially holds since $H_{n,1}=K_{n}$ and
$D_1=1$.
\\
Case $k>1$: Considering the three cases of Lemma~\ref{property} and
by using the induction hypothesis, we have:
\begin{itemize}
\item[$(a)$]
If $\x,\y\in V_{n,k-1}^{\alpha}$ for some $\alpha\in \Z_{n}$, that is,
$\x=\alpha\x'$ and $\y=\alpha\y'$, then,
$$
\dist_k(\x,\y)=\dist_{k-1}(\x',\y')\le 2(k-1)-1=2k-3<2k-1.
$$
\item[$(b)$]
If $\x\in V_{n,k-1}^0$ and $\y\in V_{n,k-1}^{\alpha}$ for some
$\alpha\in \Z_{n}^*$, that is $\x=0\x'$, $\y=\alpha\y'$, with
$\alpha\not=0$, then,
$$
\dist_k(\x,\y)=\dist_{k-1}(\x',\r^0)+1+\dist_{k-1}(\y',P^{\alpha})\le
2(k-1)+1=2k-1,
$$
since, by Lemma~\ref{lemma:diam-low}, $\dist_{k-1}(\x',\r^0)\le k-1$
and $\dist_{k-1}(\y',P^{\alpha})\le k-1$.
\item[$(c)$]
If $\x\in V_{n,k-1}^{\alpha}$ and $\y\in V_{n,k-1}^{\beta}$ for some
$\alpha,\beta\in \Z_{n}^*$, $\alpha\neq \beta$, that is
$\x=\alpha\x'$, $\y=\beta\y'$, with $\alpha,\beta\not=0$, then
\begin{eqnarray*}\hskip -.8cm
\dist_k(\x,\y) & = & \min\{
\dist_{k-1}(\x',P^\alpha)+2+\dist_{k-1}(\y',P^\beta),\,\dist_{k-1}(\x',\r^\alpha)+1+\dist_{k-1}(\r^\beta,\y')\}\\
 & \le & 2(k-1)+1=2k-1,
\end{eqnarray*}
since, by Lemma~\ref{lemma:diam-up}, $\dist_{k-1}(\x',\r^\alpha)\le
k-1$ and $\dist_{k-1}(\r^\beta,\y')\le k-1$.
\end{itemize}

Now, we have to prove that there exist two vertices in $H_{n,k}$ at
distance exactly $2k-1$.  Let $\x=\z^{01}$ and $\y=\z^{10}$. It
follows from Lemmas~\ref{property} and~\ref{lemma:diam-up} that
$\dist_k(\x,\y)=2k-1$. This completes
the proof.
\end{proof}

Note that the diameter scales logarithmically  with the order
$N=|V_{n,k}|=n^k$, since 
$D_k=\frac{2}{\log n}\log N -1$.
This property, together with the high value of the clustering coefficient
(see next section), shows that this is a small-world network.

\section{Degree and clustering distribution}
In this section we study the degree and clustering distributions of
the graph $H_{n,k}$. 
\begin{pr}
\label{propo-degree}
 The  vertex degree distribution in $H_{n,k}$ is
as follows:
\begin{itemize}
\item[$(a)$]
The root vertex $\r$ of  $H_{n,k}$ has degree
$$
\delta(\r)=\frac{(n-1)^{k+1}-(n-1)}{n-2}.$$
\item[$(b)$]
The degree of the root vertex $\r_{k-i}^{\mbox{\scriptsize
$\vecalpha$}}$ of each of the $(n-1)n^{i-1}$ subgraphs
$H_{n,k-i}^{\mbox{\scriptsize $\vecalpha$}}$, with
$i=1,2,\dots,k-1$, $\vecalpha=\alpha_1\alpha_2\ldots\alpha_i\in
\Z_n^i$ and  $\alpha_i\neq 0$, is
$$
\delta(\r_{k-i}^{\mbox{\scriptsize
$\vecalpha$}})=\frac{(n-1)^{k-i+1}-(n-1)}{n-2}+(n-2).
$$
\item[$(c)$]
The  degree of the $(n-1)^k$ peripheral vertices $\p$ of $H_{n,k}$ is
$$ \delta(\p)=n+k-2.
$$
\item[$(d)$]
The  degree of the $(n-1)^{k-i}n^{i-1}$ peripheral vertices
$\p_{k-i}^{\mbox{\scriptsize $\vecalpha$}}$ of the subgraphs
$H_{n,k-i}^{\mbox{\scriptsize $\vecalpha$}}$, with
$i=1,2,\dots,k-1$, $\vecalpha=\alpha_1\alpha_2\ldots\alpha_i\in
\Z_n^i$ and  $\alpha_i\neq 0$,  is
$$
\delta(\p_{k-i}^{\mbox{\scriptsize $\vecalpha$}})= n+k-i-2.
$$
\end{itemize}
\end{pr}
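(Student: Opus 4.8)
The plan is to read off every degree directly from the three adjacency rules (\ref{adj1})--(\ref{adj3}), exploiting the fact that $\delta(\x)$ depends only on the ``shape'' of the label $\x=x_1x_2\ldots x_k$. First I would dispose of (\ref{adj1}): each vertex lies in exactly one bottom copy of $K_n$ (the subgraph of property $(a)$ in Subsection~\ref{hierarquicalprop}), so (\ref{adj1}) contributes exactly $n-1$ neighbours to \emph{every} vertex. What remains is to add the contributions of (\ref{adj2}) and (\ref{adj3}), and here the labels split along the root/peripheral dichotomy of properties $(d)$ and $(e)$: root-type labels have $x_k=0$ and peripheral-type labels have $x_k\neq0$.

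For the root-type vertices (parts $(a)$ and $(b)$) I would let $i$ be the position of the last nonzero coordinate, so that $i=0$ gives the root $\r=00\ldots0$ and $1\le i\le k-1$ gives a subgraph root $\r_{k-i}^{\mbox{\scriptsize $\vecalpha$}}=\alpha_1\ldots\alpha_i0\ldots0$ with $\alpha_i\neq0$. Because $x_{i+1},\ldots,x_k$ are all zero, such a label can occur as the \emph{left} side of (\ref{adj2}) for each $j$ with $i\le j\le k-2$, contributing the $(n-1)^{k-j}$ peripheral vertices of the corresponding subgraph, and it never occurs as the right side (its last coordinate is $0$); summing over $j$ gives $\sum_{l=2}^{k-i}(n-1)^l$. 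In (\ref{adj3}) the only admissible index is $j=i$, which for $i\ge1$ supplies the $n-2$ sibling roots of property $(g)$ and for $i=0$ supplies nothing. Adding the $n-1$ from (\ref{adj1}) and evaluating the geometric series
$$
(n-1)+\sum_{l=2}^{k-i}(n-1)^l=\sum_{l=1}^{k-i}(n-1)^l=\frac{(n-1)^{k-i+1}-(n-1)}{n-2}
$$
then yields part $(a)$ directly (case $i=0$, with no (\ref{adj3}) term) and part $(b)$ after adding the extra $n-2$.

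For the peripheral-type vertices (parts $(c)$ and $(d)$) I would instead track $m$, the length of the trailing run of nonzero coordinates, so that $m=k$ corresponds to the peripheral vertices $\p$ of $H_{n,k}$ and $m=k-i$ to the peripheral vertices $\p_{k-i}^{\mbox{\scriptsize $\vecalpha$}}$ of the level-$i$ subgraphs. Such a label can never be the left side of (\ref{adj2}) nor appear in (\ref{adj3}), since it has no trailing zeros; it occurs as the \emph{right} side of (\ref{adj2}) precisely for the $j$ with $k-m\le j\le k-2$, each giving the single neighbour $x_1\ldots x_j0\ldots0$. Hence it has $m-1$ such neighbours, and together with (\ref{adj1}) this gives $\delta=(n-1)+(m-1)=n+m-2$, which is part $(c)$ for $m=k$ and part $(d)$ for $m=k-i$.

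Finally I would confirm the multiplicities, which is just a partition of $\Z_n^k$ by these two shape parameters: the labels whose last nonzero coordinate sits in position $i$ number $(n-1)n^{i-1}$ (part $(b)$, with a single root for $i=0$), while those with a trailing run of length exactly $k-i$ number $(n-1)^{k-i}n^{i-1}$ (part $(d)$, with $(n-1)^k$ for $m=k$); these totals sum to $n^{k}=|V_{n,k}|$, recovering (\ref{num-roots}) and (\ref{num-peripheral}). I expect the only delicate step to be the index bookkeeping in (\ref{adj2}) and (\ref{adj3}): one must check that the stated ranges $i\le k-2$ and $i\le k-1$ are exactly what prevents a neighbour already counted by the bottom-$K_n$ rule (\ref{adj1}) from being counted a second time, so that the three rules contribute disjoint neighbour sets. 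Once those ranges are pinned down, the geometric summation and the counting above are routine.
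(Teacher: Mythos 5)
Your proposal is correct and follows essentially the same route as the paper: both read each degree directly off the adjacency rules (\ref{adj1})--(\ref{adj3}) according to the shape of the label (position of the last nonzero coordinate for roots, length of the trailing nonzero run for peripheral vertices), sum the resulting geometric series for the roots, and add the $n-2$ sibling roots from (\ref{adj3}). Your version is somewhat more explicit than the paper's (which, for part $(b)$, simply invokes part $(a)$ inside the subgraph and appends the $n-2$ cross edges, and which leaves the disjointness of the three neighbour sets and the multiplicity count implicit), but the underlying argument is the same.
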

\begin{proof}
$(a)$ By the adjacency conditions (\ref{adj1}) and (\ref{adj2}), the
root of $H_{n,k}$ has degree
$$
\textstyle
 \delta(\r)=\sum_{i=1}^k(n-1)^i
=\frac{(n-1)^{k+1}-(n-1)}{n-2}.
$$
$(b)$ The root of the subgraph $H_{n,k-i}^{\mbox{\scriptsize
$\vecalpha$}}$, $i=1,2,\dots,k-1$,
$\vecalpha=\alpha_1\alpha_2\ldots\alpha_i\in \Z_n^i$ and
$\alpha_i\neq 0$, is adjacent, by $(a)$, to
$\frac{(n-1)^{k-i+1}-n+1}{n-2}$ vertices belonging to the same
subgraph, and also, by (\ref{adj3}), to the $n-2$
other roots `at the same level'.

\noindent $(c)$ Each peripheral vertex of $H_{n,k}$ is adjacent, by
(\ref{adj1}), to $n-1$ vertices and, by (\ref{adj2}), to $k-1$
 roots of other subgraphs.

\noindent $(d)$ Each peripheral vertex of
$H_{n,k-i}^{\mbox{\scriptsize $\vecalpha$}}$, $i=1,2,\dots,k-1$,
$\vecalpha=\alpha_1\alpha_2\ldots\alpha_i\in \Z_n^i$ and
$\alpha_i\neq 0$, is adjacent, by (\ref{adj1}), to $n-1$ vertices
(of the subgraph isomorphic to $K_n$) and, by (\ref{adj2}), to $k-i$
roots of other subgraphs.
\end{proof}

The above results on the degree distribution of $H_{n,k}$ are
summarized in Table~\ref{taula:grau}. Note that, from such a
distribution, we can obtain again Proposition \ref{size} since the
number of edges can be computed from
$$
2|E_{n,k}|=\delta(\r)+\sum_{i=1}^{k-1}(n-1)n^{i-1}\delta(\r_{k-i}^{\mbox{\scriptsize
$\vecalpha$}})+(n-1)^k\delta(\p)+\sum_{i=1}^{k-1}(n-1)^{k-i}n^{i-1}\delta(\p_{k-i}^{\mbox{\scriptsize
$\vecalpha$}}),
$$
which yields (\ref{num-edges}). Moreover,  using this result, we see that, for a large dimension $k$, the average degree turns out to be of order
$$
\overline{\delta}=\frac{2|E_{n,k}|}{|V_{n,k}|}= \frac{3 n^{k+1}-4
n^k-2(n-1)^{k+1}-n+2}{n^k}\sim n+2k-2.
$$


From the degree distribution and for large $k$ we see that the number
of vertices with a given degree $z$, $N_{n,k}(z)$, decreases as a power
of the degree $z$ and, therefore, the graph is scale-free~\cite{BaAl99,CoFeRa04,DoMe02}.
As the degree distribution of the graph is discrete,
to relate the exponent of this discrete degree
distribution to the standard
$\gamma$ exponent of a continuous degree distribution
for random scale free networks, we use a cumulative distribution
$$
P_{\mbox{\footnotesize{cum}}}(z)\equiv\sum_{z'\ge
z}{\mathcal{j}N_{n,k}(z')\mathcal{j}}/
{\mathcal{j}V_{n,k}\mathcal{j}}\sim z^{1-\gamma},
$$
where $z$ and
$z'$ are points of the discrete degree spectrum. When $z =
\frac{(n-1)^{k-i+1}-n+2}{n-2}$, there are exactly $(n-1)n^{i-1}$
vertices with degree $z$. The number of vertices with this or a
higher degree is
$$
(n-1)n^{i-1}+\cdots+(n-1)n+(n-1)+1=1+
(n-1)\sum_{j=0}^{i-1}n^{j}=n^i.
$$
Then, we have
$z^{1-\gamma}={n^i}/{n^k}=n^{i-k}.$ Therefore, for large $k$,
$((n-1)^{k-i})^{1-\gamma} \sim n^{i-k}$ and
$$\gamma \sim 1+\frac{\log n}{\log (n-1)}.
$$
For $n=5$ this gives the same value of $\gamma$ as in the case of
the hierarchical network introduced in~\cite{RaBa03}. This network
can be obtained from $H_{5,k}$ by deleting the edges that join the
roots of $H_{5,k-i}^j$, $j\neq 0, 1\leq i\leq k-2$.

\begin{table}[htbp]
\caption{Degree and clustering distribution for $H_{n,k}$. }
\label{taula:grau}
\begin{tabular}{p{3.2cm} c c c}
\hline\hline
Vertex class &  No. vertices  &      Degree                  &       Clustering coefficient   \\
\hline
$H_{n,k}$ root    &  1  & $\frac{(n-1)^{k+1}-(n-1)}{n-2}$ & $\frac{(n-2)^2}{(n-1)^{k+1}-2n+3}$ \\
&  &  & \\
$H_{n,k-i}^{\mbox{\scriptsize $\vecalpha$}}$ roots & $(n-1)n^{i-1}$
& $\frac{(n-1)^{k-i+1}-(n-1)}{n-2}+n-2$
& $\frac{(n-2)^2}{(n-1)^{k-i+1}+(n-1)^2-3n+4}$ \\
\footnotesize{$i\!\!=\!\!1,2,\ldots,k-1$, \hskip.5cm
$\vecalpha\!\!=\!\!\alpha_1\alpha_2\ldots\alpha_i\!\!\in\!\!\Z_n^i$, $\alpha_i\neq0$} & &  \\
&  &  & \\
$H_{n,k}$ peripheral & $(n-1)^k$ & $n+k-2$ & $\frac{(n-1)^2+(2k-3)(n-1)+2-2k}{(n+k-2)(n+k-3)}$ \\
&  &  & \\
$H_{n,k-i}^{\mbox{\scriptsize $\vecalpha$}}$ peripheral & $(n-1)^{k-i}n^{i-1}$ & $n+k-i-2$
& $\frac{(n-1)^2+(2k-2i-3)(n-1)+2+2i-2k}{(n+k-i-2)(n+k-i-3)}$ \\
\footnotesize{$i\!\!=\!\!1,2,\ldots,k-1$, \hskip.5cm
$\vecalpha\!\!=\!\!\alpha_1\alpha_2\ldots\alpha_i\!\!\in\!\!\Z_n^i$, $\alpha_i\neq0$} & & \\
\hline\hline
\end{tabular}
\end{table}

Next we find the clustering distribution of the vertices of
$H_{n,k}$. The clustering coefficient of a graph $G$ measures its
`connectedness' and is another parameter used to characterize
small-world and scale-free networks. The clustering coefficient of a
vertex was introduced in~\cite{WaSt98} to quantify this concept. For
each vertex $v\in V(G)$ with degree $\dg_v$, its {\it clustering
coefficient\/} $c(v)$ is defined as the fraction of the
${\dg_v\choose 2}$ possible edges among the neighbors of $v$ that
are present in $G$. More precisely, if $\epsilon_v$ is the number of
edges between the $\dg_v$ vertices adjacent to vertex $v$, its
clustering coefficient is
\begin{equation}
\label{c(v)}
c(v)=\frac{2\epsilon_v}{\dg_v(\dg_v-1)},
\end{equation}
whereas the {\em clustering coefficient} of $G$, denoted by $c(G)$,
is the average of $c(v)$ over all nodes $v$ of $G$:
\begin{equation}
\label{c(G)} c(G)=\frac{1}{|V(G)|}\sum_{v\in V(G)}c(v).
\end{equation}

Another definition of {\em clustering coefficient} of $G$ was
given in \cite{NeWaSt02} as
\begin{equation}
\label{c'(G)} c'(G)=\frac{3\,T(G)}{\tau(G)},
\end{equation}
where $\tau (G)$ and $T(G)$ are, respectively, the number of {\it
triangles} (subgraphs isomorphic to $K_3$) and the number of {\it
triples} (subgraphs isomorphic to a path on $3$ vertices) of $G$.
A triple at a vertex $v$ is a $3$-path with central vertex $v$. Thus
the number of triples at $v$ is
\begin{equation}
\label{tau(v)}
 \tau(v)={\dg_v \choose 2}=\frac{\dg_v(\dg_v-1)}{2}.
\end{equation}
The total number of triples of $G$ is denoted by $\tau(G)=\sum_{v\in
V(G)}\tau(v)$. Using these parameters, note that the clustering
coefficient of a vertex $v$ can also be written as
$c(v)=\frac{T(v)}{\tau(v)}$, where $T(v)={\delta_v \choose 2}$ is
the number of triangles of $G$ that contain the vertex $v$.
From this result,
we get
that $c(G)=c'(G)$ if, and only if,
$$
|V(G)|=\frac{\sum_{v\in V(G)}\tau(v)}{\sum_{v\in
V(G)}T(v)}\sum_{v\in V(G)} \frac{T(v)}{\tau(v)}.
$$
This is true for regular graphs or for graphs such that all their vertices
have the same clustering coefficient. In fact, $c'(G)$ was already
known in the context of social networks as {\em transitivity
coefficient}.

We first compute the clustering coefficient and, then, the
transitivity coefficient.

\begin{pr}
The clustering distribution of $H_{n,k}$ is the following:
\begin{itemize}
\item[$(a)$]
The root $\r$ of $H_{n,k}$ has clustering coefficient
$$
c(\r)=\frac{(n-2)^2}{(n-1)^{k+1}-2n+3}.
$$
\item[$(b)$]
The clustering coefficient of the root vertex
$\r_{k-i}^{\mbox{\scriptsize $\vecalpha$}}$ of each of the
$(n-1)n^{i-1}$ subgraphs $H_{n,k-i}^{\mbox{\scriptsize
$\vecalpha$}}$, with $i=1,2,\dots,k-1$,
$\vecalpha=\alpha_1\alpha_2\ldots\alpha_i\in \Z_n^i$ and
$\alpha_i\neq 0$, is
$$
c(\r_{k-i}^{\mbox{\scriptsize
$\vecalpha$}})=\frac{(n-2)^2}{(n-1)n^{k-i+1}+(n-1)^2-3n+4}.
$$
\item[$(c)$]
The  clustering coefficient of the $(n-1)^k$ peripheral vertices
$\p$ of $H_{n,k}$ is
$$
c(\p)=\frac{(n-1)^2+(2k-3)(n-1)+2-2k}{(n+k-2)(n+k-3)}.
$$
\item[$(d)$]
The  clustering coefficient of the $(n-1)^{k-i}n^{i-1}$ peripheral
vertices $\p_{k-i}^{\mbox{\scriptsize $\vecalpha$}}$ of the
subgraphs $H_{n,k-i}^{\mbox{\scriptsize $\vecalpha$}}$, with
$i=1,2,\dots,k-1$, $\vecalpha=\alpha_1\alpha_2\ldots\alpha_i\in
\Z_n^i$ and  $\alpha_i\neq 0$ is
$$
c(\p_{k-i}^{\mbox{\scriptsize
$\vecalpha$}})=\frac{(n-1)^2+(2k-2i-3)(n-1)+2+2i-2k}{(n+k-i-2)(n+k-i-3)}.
$$
\end{itemize}
\end{pr}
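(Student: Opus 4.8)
The plan is to use the vertex formula~(\ref{c(v)}), namely $c(v)=2\epsilon_v/(\delta_v(\delta_v-1))$. Since Proposition~\ref{propo-degree} already supplies every degree $\delta_v$, the whole task reduces to computing $\epsilon_v$, the number of edges among the neighbours of $v$, for each of the four vertex classes. In each case I would first describe the neighbourhood of $v$ explicitly from the adjacency rules (\ref{adj1})--(\ref{adj3}), and then count the edges inside it, rule by rule; the stated formulas then follow by substitution.

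For the root $\r$ (case $(a)$), its neighbours are exactly the peripheral vertices of the nested subgraphs rooted at $\r$, i.e.\ the labels $\underbrace{0\ldots0}_{a}x_{a+1}\ldots x_k$ with $x_{a+1},\ldots,x_k\neq0$ and $0\le a\le k-1$. The key observation is that such a label has no trailing zeros, so it can never be the left endpoint of an edge of type (\ref{adj2}) or (\ref{adj3}) joining two neighbours; hence the only edges among the neighbours come from (\ref{adj1}), grouping them into cliques $K_{n-1}$ obtained by fixing the first $k-1$ coordinates. This gives $\epsilon_\r={n-1\choose2}\,\frac{(n-1)^k-1}{n-2}$, so that $c(\r)=\frac{n-2}{\delta(\r)-1}$, which simplifies to the stated value. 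For a subgraph root $\r_{k-i}^{\mbox{\scriptsize $\vecalpha$}}$ (case $(b)$), the same argument governs its neighbours inside $H_{n,k-i}^{\mbox{\scriptsize $\vecalpha$}}$, and to this I would add the $n-2$ sibling roots, which by property $(g)$ of Subsection~\ref{hierarquicalprop} form a clique $K_{n-2}$ via (\ref{adj3}); a short case-check of the three rules shows there are \emph{no} edges between these siblings and the in-subgraph neighbours. Thus $\epsilon$ gains a term ${n-2\choose2}$, yielding $c(\r_{k-i}^{\mbox{\scriptsize $\vecalpha$}})=\frac{n-2}{\delta(\r_{k-i}^{\mbox{\scriptsize $\vecalpha$}})}$ and the claimed formula.

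The peripheral case $(c)$ is where the real work lies. The neighbours of $\p=p_1\ldots p_k$ split into two groups: the set $N_A=\{p_1\ldots p_{k-1}y_k:y_k\neq p_k\}$ from (\ref{adj1}), which together with $\p$ completes the innermost $K_n$ and so contributes ${n-1\choose2}$ edges; and the chain of roots $N_B=\{p_1\ldots p_i\underbrace{0\ldots0}_{k-i}:0\le i\le k-2\}$ from (\ref{adj2}). One checks that $N_B$ is an independent set, but --- and this is the subtle point --- each of the $n-2$ all-nonzero vertices of $N_A$ (those with $y_k\neq0$) is joined by (\ref{adj2}) to every one of the $k-1$ roots of $N_B$, whereas the single root-type vertex $p_1\ldots p_{k-1}0$ of $N_A$ is joined to none of them. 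This produces $(n-2)(k-1)$ cross edges, so $\epsilon_\p={n-1\choose2}+(n-2)(k-1)$, and substituting into (\ref{c(v)}) gives $(c)$.

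Case $(d)$ I would obtain from $(c)$ by self-similarity. By property $(f)$ the subgraph $H_{n,k-i}^{\mbox{\scriptsize $\vecalpha$}}$ is isomorphic to $H_{n,k-i}$, and comparing the degree in Proposition~\ref{propo-degree}$(d)$, namely $n+k-i-2=n+(k-i)-2$, with the peripheral degree in $H_{n,k-i}$ from $(c)$ shows that $\p_{k-i}^{\mbox{\scriptsize $\vecalpha$}}$ has no neighbour outside its subgraph. Hence its closed neighbourhood is isomorphic to that of a peripheral vertex of $H_{n,k-i}$, and $(c)$ with $k$ replaced by $k-i$ yields the claimed expression. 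The main obstacle throughout is not the algebra --- once each $\epsilon_v$ is known, everything drops out of (\ref{c(v)}) and matches Table~\ref{taula:grau} --- but the precise edge bookkeeping among the neighbours: verifying in $(a)$--$(b)$ that only rule (\ref{adj1}) (plus the sibling clique) creates internal edges, and in $(c)$ correctly pinning down the $(n-2)(k-1)$ cross edges between the innermost clique and the root chain. I would treat this enumeration as the crux and carry it out carefully, leaving the final simplifications as routine.
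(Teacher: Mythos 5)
Your proposal is correct and takes essentially the same route as the paper: identify each vertex class's neighbourhood, count the edges inside it from the adjacency rules, and substitute into $c(v)=2\epsilon_v/(\dg_v(\dg_v-1))$; your counts $\epsilon_{\r}=\frac{n-2}{2}\delta(\r)$, $\epsilon_{\r_{k-i}^{\mbox{\scriptsize $\vecalpha$}}}=\frac{n-2}{2}\,\delta'+{n-2\choose 2}$ (with $\delta'$ the in-subgraph degree) and ${n-1\choose 2}+(n-2)(k-1)$ cross edges for the peripheral case are exactly the ones the paper uses, the only organizational difference being that the paper computes $(a)$, $(b)$, $(d)$ directly and omits $(c)$, while you compute $(c)$ directly and get $(d)$ by self-similarity. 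One remark: your derivation in $(b)$ gives denominator $(n-1)^{k-i+1}+(n-1)^2-3n+4$, which agrees with the paper's own proof and with Table~\ref{taula:grau} but not with the statement as printed, whose $(n-1)n^{k-i+1}$ is evidently a typo.
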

\begin{proof}
We prove only three of the cases, as the proof of the other is similar.
\begin{itemize}
\item[$(a)$]
As the root of  $H_{n,k}$ is adjacent to $\sum_{i=1}^k(n-1)^i$
 vertices  with degree $n-2$, its clustering coefficient is
$$
c(\r)=\frac{\frac{n-2}{2}\frac{(n-1)^{k+1}-n+1}{n-2}}
{\frac{1}{2}\frac{(n-1)^{k+1}-n+1}{n-2}\left(\frac{(n-1)^{k+1}-n+1}{n-2}-1\right)}=
\frac{(n-2)^2}{(n-1)^{k+1}-2n+3}.
$$
\item[$(b)$]
The roots of $H_{n,k-i}^{\mbox{\scriptsize $\vecalpha$}}$
($i=1,2,\dots,k-1$, $\alpha_i\neq 0$) have clustering coefficient
\begin{eqnarray*}
c(\r_{k-i}^{\mbox{\scriptsize $\vecalpha$}}) & = &
\frac{\frac{n-2}{2}\frac{(n-1)^{k-i+1}-n+1}{n-2}+\frac{(n-2)(n-3)}{2}}{\frac{1}{2}
\left(\frac{(n-1)^{k-i+1}-n+1}{n-2}+n-2\right)\left(\frac{(n-1)^{k-i+1}-n+1}{n-2}+n-3\right)}\\
& = &\frac{(n-2)^2}{(n-1)^{k-i+1}+(n-1)^2-3n+4}.
\end{eqnarray*}
\item[$(d)$]  The clustering coefficient of the peripheral vertices of $H_{n,k-i}^{\mbox{\scriptsize $\vecalpha$}}$
($i=1,2,\dots,k-1$, $\alpha_i\neq 0$)  is
\begin{eqnarray*}
c(\p_{k-i}^{\mbox{\scriptsize
$\vecalpha$}}) & = &\frac{\frac{(n-1)(n-2)}{2}+(n-2)(k-i-1)}{\frac{1}{2}(n+k-i-2)(n+k-i-3)}\\
& = & \frac{(n-1)^2+(2k-2i-3)(n-1)+2+2i-2k}{(n+k-i-2)(n+k-i-3)}.
\end{eqnarray*}
In particular, note that, for  $i=k-1$,  the peripheral vertices of
$H_{n,1}^\alpha$, $\alpha\neq 0$, have clustering coefficient
$\frac{(n-1)^2-n+1}{(n-1)n}=1$.
\end{itemize}
\end{proof}
\begin{figure}[htbp]
\begin{center}
\includegraphics[width=12cm]{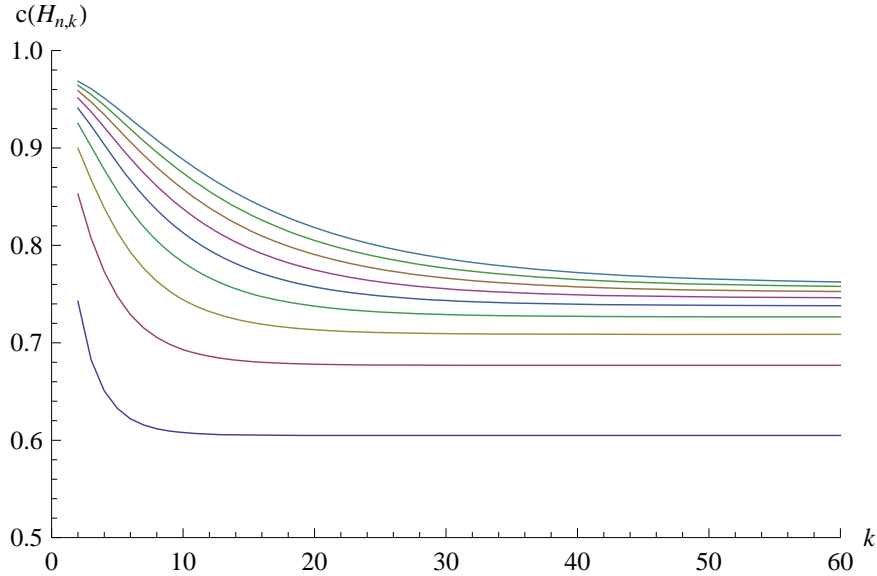}
\caption{The clustering coefficient of $H_{n,k}$ for $n=4,6,\ldots,
20$.} \label{clustdib}
\end{center}
\end{figure}

The above results on the clustering distribution are summarized in
Table~\ref{taula:grau}. From these results, we can compute the
clustering coefficient of $H_{n,k}$, which is shown in
Fig.~\ref{clustdib}. The clustering coefficient tends to 1 for large
$n$.


We think that this constant value for the clustering coefficient, which is independent of the order of the graph, together with the
$\gamma$ value of the power law distribution of the degrees, is also
a good characterization of {\em modular} hierarchical networks.
Observations in metabolic networks of different organisms show that
they are highly modular and have these properties,  confirming the
claim, see~\cite{BaOl04,RaSoMoOlBa02}.


To find the transitivity coefficient, we need to calculate the number
of triangles and the number of triples of the graph.

\begin{pr}
The number $T_{n,k}$ of triangles of $H_{n,k}$ is
$$
T_{n,k}=\frac{1}{2}(n-2)\left(1-\frac{n}{3}-(n-1)^{k+1}+\frac{2}{3}n^k(2n-3)\right).
$$
\end{pr}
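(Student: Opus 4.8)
The plan is to count each triangle \emph{locally}, once at each of its three vertices, and then divide by $3$. Since every triangle $\{x,y,z\}$ contributes exactly $1$ to each of $T(x),T(y),T(z)$, where $T(v)=\epsilon_v=c(v){\delta_v\choose 2}$ denotes the number of triangles through $v$ (equivalently, the number of edges among the neighbours of $v$), we have the identity
$$3\,T_{n,k}=\sum_{v\in V_{n,k}}T(v).$$
The four values of $T(v)$ we need are read off from the clustering coefficients of the previous proposition via $T(v)=c(v){\delta_v\choose 2}$; explicitly,
$$T(\r)=\frac{(n-1)^{k+1}-n+1}{2},\qquad T(\p)=\frac{(n-2)(n+2k-3)}{2},$$
$$T(\r_{k-i}^{\vecalpha})=\frac{(n-1)^{k-i+1}-n+1}{2}+\frac{(n-2)(n-3)}{2},\qquad T(\p_{k-i}^{\vecalpha})=\frac{(n-2)(n+2k-2i-3)}{2},$$
the value $T(\p)$ being the case $i=0$ of the last expression.

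Next I would assemble the global sum using the partition of $V_{n,k}$ given by property $(e)$ in Subsection~\ref{hierarquicalprop}: every vertex is the root or a peripheral vertex of exactly one of the nested subgraphs, with multiplicities $1$ for $\r$, $(n-1)n^{i-1}$ for the inner roots, $(n-1)^k$ for $\p$, and $(n-1)^{k-i}n^{i-1}$ for the inner peripheral vertices ($i=1,\dots,k-1$). Hence
$$3\,T_{n,k}=T(\r)+\sum_{i=1}^{k-1}(n-1)n^{i-1}T(\r_{k-i}^{\vecalpha})+(n-1)^kT(\p)+\sum_{i=1}^{k-1}(n-1)^{k-i}n^{i-1}T(\p_{k-i}^{\vecalpha}).$$
Substituting the four expressions above turns the right-hand side into a combination of elementary finite sums.

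The remaining work is purely computational. The root terms produce geometric sums such as $\sum_{i=1}^{k-1}n^{i-1}(n-1)^{k-i+1}$ and $\sum_{i=1}^{k-1}n^{i-1}$, while the peripheral terms produce, after the substitution $j=k-i$, both a geometric sum $\sum_{j=1}^{k-1}n^{k-j-1}(n-1)^j$ and the arithmetic--geometric sum $\sum_{j=1}^{k-1}j\,n^{k-j-1}(n-1)^j$. I expect this last sum, carrying the extra factor $j$, to be the main obstacle: it must be summed in closed form from the differentiated geometric series, and together with the geometric sums it produces the terms whose powers are $n^{k+1}$, $n^k$ and $(n-1)^{k+1}$. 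Collecting all four sums, dividing by $3$, and simplifying, the explicit linear-in-$k$ contributions cancel (the final answer depends on $k$ only through the exponents $k$ and $k+1$), leaving the stated closed form. As a consistency check, for $k=1$ both sums are empty and the identity reduces to $3\,T_{n,1}=T(\r)+(n-1)T(\p)=\frac{n(n-1)(n-2)}{2}$, so $T_{n,1}={n\choose 3}$, the correct number of triangles of $H_{n,1}=K_n$.
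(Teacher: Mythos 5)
Your proof is correct, but it takes a genuinely different route from the paper's. The paper counts triangles recursively: when $H_{n,k}$ is assembled from $n$ copies of $H_{n,k-1}$, the new edges of types (\ref{adj2'}) and (\ref{adj3'}) create exactly $(n-1)^{k-1}{n-1\choose 2}$ and ${n-1\choose 3}$ new triangles, giving $T_{n,k}=nT_{n,k-1}+(n-1)^{k-1}{n-1\choose 2}+{n-1\choose 3}$ with $T_{n,1}={n\choose 3}$, and the closed form follows by unrolling this recurrence. You instead count globally via $3T_{n,k}=\sum_v T(v)$, reading the per-vertex triangle counts $T(v)=\epsilon_v$ off the numerators already computed in the clustering proposition and weighting them by the vertex census of Table~1; your four values and multiplicities are all correct (for instance, for $(n,k)=(3,2)$ and $(4,2)$ your sum gives $3T=15$ and $78$, i.e.\ $T=5$ and $26$, matching the recurrence). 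Your route parallels exactly how the paper re-derives $|E_{n,k}|$ from the degree distribution, and it has the virtue of acting as a consistency check between the clustering distribution and the triangle count; its cost is a slightly heavier summation --- the arithmetic--geometric sum $\sum_j j\,n^{k-j-1}(n-1)^j$, whose $k$-linear contributions must be seen to cancel against those of the $(n-1)^k T(\p)$ term, as you correctly anticipate --- whereas the paper's recurrence needs only geometric sums. Both arguments leave the final algebra to the reader at about the same level of detail, and your $k=1$ sanity check is a worthwhile addition.
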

\begin{proof}
When constructing $H_{n,k}$ from $n$ copies of $H_{n,k-1}$, the
adjacencies (\ref{adj2'}) and (\ref{adj3'}) introduce $(n -
1)^{k-1}{n-1\choose 2}$  and ${n-1\choose 3}$ new triangles,
respectively. Therefore,
$$
T_{n,k} =n T_{n,k-1}+(n - 1)^{k-1}{n-1\choose 2} + {n-1\choose 3}.
$$

By applying recursively this formula and taking into account that
$T_{n,1}={n\choose 3}$, we get the result.
\end{proof}

Moreover, from the results of Proposition \ref{propo-degree} (or
Table 1) giving the number of vertices of each degree, we have the
following result for the number of triples (we omit the obtained
explicit formula, because of its length):

\begin{pr}
The number $\tau_{n,k}$ of triples of $H_{n,k}$ is
$$
\textstyle \tau_{n,k} =  {\delta(\r)\choose 2}+
(n-1)\sum_{i=1}^{k-1}n^{i-1}{\delta(\r_{k-i}^{\mbox{\scriptsize$\vecalpha$}})\choose
2} + (n-1)^k{\delta(\p)\choose 2} +\sum_{i=1}^{k-1}
(n-1)^{k-i}n^{i-1}{\delta(\p_{k-i}^{\mbox{\scriptsize
$\vecalpha$}})\choose 2}.
$$
\end{pr}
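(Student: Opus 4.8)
The plan is to read the result straight off the per-vertex triple count together with the degree distribution already in hand. By~(\ref{tau(v)}) the number of triples centred at a vertex $v$ is $\tau(v)={\dg_v\choose 2}$, and by definition $\tau_{n,k}=\sum_{v\in V_{n,k}}\tau(v)$. So the entire task reduces to evaluating $\sum_{v\in V_{n,k}}{\dg_v\choose 2}$, and every degree $\dg_v$ is one of the four values computed in Proposition~\ref{propo-degree} (equivalently, Table~\ref{taula:grau}). The point is that $\tau(v)$ depends on $v$ only through its degree, so the sum is constant on each degree class and collapses to a weighted count.

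Concretely, I would first fix the partition of $V_{n,k}$ into the four classes of Proposition~\ref{propo-degree}: the single root $\r$ of $H_{n,k}$; for each $i=1,\dots,k-1$ the $(n-1)n^{i-1}$ subgraph roots $\r_{k-i}^{\mbox{\scriptsize $\vecalpha$}}$ (those $\vecalpha=\alpha_1\cdots\alpha_i$ with $\alpha_i\neq0$); the $(n-1)^k$ peripheral vertices $\p$ of $H_{n,k}$; and for each $i=1,\dots,k-1$ the $(n-1)^{k-i}n^{i-1}$ subgraph peripheral vertices $\p_{k-i}^{\mbox{\scriptsize $\vecalpha$}}$. Then, since $\dg_v$ is constant across each class, grouping $\sum_v{\dg_v\choose 2}$ by class and pulling out the (constant) degree immediately gives
$$
\tau_{n,k}={\delta(\r)\choose 2}+\sum_{i=1}^{k-1}(n-1)n^{i-1}{\delta(\r_{k-i}^{\mbox{\scriptsize$\vecalpha$}})\choose 2}+(n-1)^k{\delta(\p)\choose 2}+\sum_{i=1}^{k-1}(n-1)^{k-i}n^{i-1}{\delta(\p_{k-i}^{\mbox{\scriptsize $\vecalpha$}})\choose 2},
$$
which, after reordering the two sums, is exactly the claimed expression.

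The one thing that genuinely needs checking — and the step I expect to be the main obstacle — is the bookkeeping that these four classes really do partition $V_{n,k}$ with precisely the stated multiplicities. For this I would invoke property~$(e)$ of Subsection~\ref{hierarquicalprop}: a vertex $x_1x_2\ldots x_k$ is a root of some subgraph iff $x_k=0$ and peripheral of some subgraph iff $x_k\neq0$, so the ``root'' and ``peripheral'' blocks are disjoint. Within the root block, the level index $i$ is pinned down as the position of the last nonzero coordinate (with the all-zero word giving $\r$ itself), which is why $\alpha_i\neq0$ is imposed and why there is no double counting across the levels; the analogous run-length argument handles the peripheral block. As a final sanity check I would confirm the class sizes add up, $1+\sum_{i=1}^{k-1}(n-1)n^{i-1}+(n-1)^k+\sum_{i=1}^{k-1}(n-1)^{k-i}n^{i-1}=n^{k-1}+(n-1)n^{k-1}=n^k=|V_{n,k}|$, using the same geometric sums as in~(\ref{num-roots}) and~(\ref{num-peripheral}); once exhaustiveness and disjointness are secured, the summation is purely mechanical and the explicit formula is suppressed only because of its length.
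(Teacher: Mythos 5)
Your proposal is correct and follows essentially the same route as the paper, which derives the formula directly from $\tau(v)=\binom{\delta_v}{2}$ and the degree classes of Proposition~4.1 (Table~1), summing over the four classes of vertices. Your additional verification that the four classes partition the vertex set with the stated multiplicities is sound and matches what the paper already established in property~$(e)$ of Subsection~2.2.
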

 Now the transitivity coefficient follows from the former two
results and, as Fig. \ref{transfig} shows, tends quickly to zero as
$k\rightarrow \infty$.

\begin{figure}[htbp]
\begin{center}
\includegraphics[width=12cm]{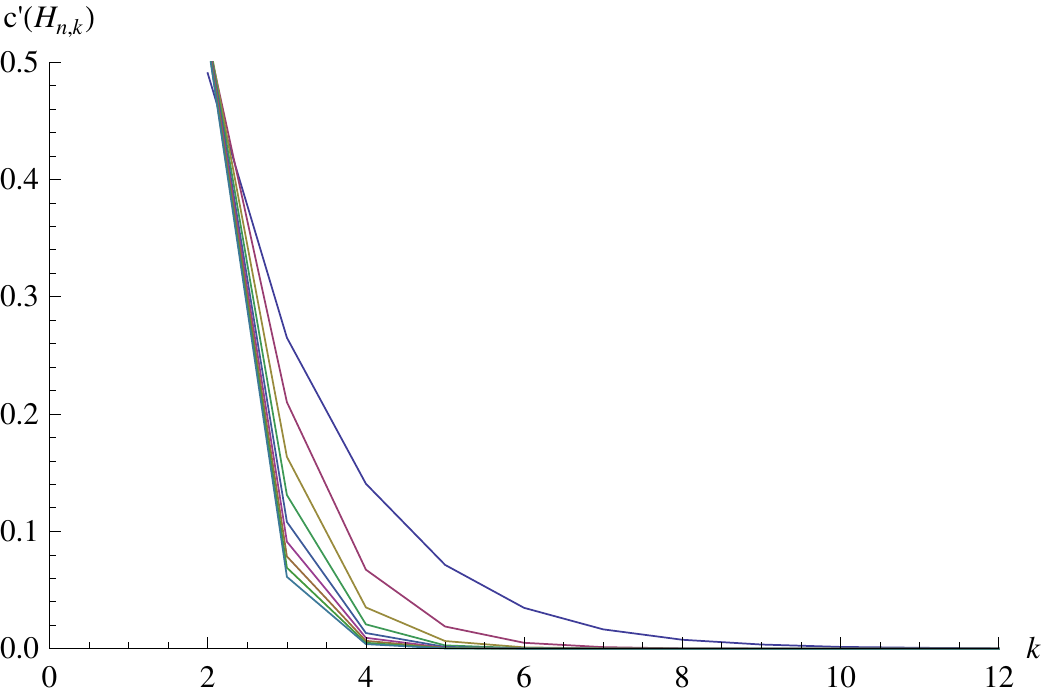}
\caption{Transitivity coefficient of $H_{n,k}$ for
$n=4, 6,\ldots,20$.} \label{transfig}
\end{center}
\end{figure}


\section{Conclusions}

In this paper we have provided a family of graphs that generalize
the hierarchical network introduced in~\cite{RaSoMoOlBa02}, and
combine a modular structure with a scale-free topology, in order to
model modular structures associated to living organisms, social
organizations and technical systems. For the proposed graphs, we have calculated their radius, diameter, degree distribution and clustering coefficient. Moreover,  we have seen that they are scale-free with a power law
exponent, which depends on the initial complete graph; that the
clustering distribution $c(z)$ scales with the degree as $z^{-1}$;
and that the clustering coefficient does not depend on the order of the graph,
as in many networks associated to real
systems~\cite{BaOl04,RaBa03,RaSoMoOlBa02}. Finally, it is worth
mentioning that our definition can be generalized by taking the
vertex set $\Z_{n_1}\times \Z_{n_2}\times \cdots \times \Z_{n_p}$
(instead of $\Z_{n}^p$), so obtaining similar results.

\subsection*{Acknowledgment}
This research was supported by the
{\em Ministerio de Econom\'{\i}a y Competitividad} (Spain) and the {\em European Regional
Development Fund} under project MTM2011-28800-C02-01, and the {\em Catalan Research
Council} under project 2014SGR1147.

\bibliographystyle{plain}

\begin{thebibliography}{99}


\bibitem{AlBa02}
R. Albert and A.-L. Barab\'asi,
Statistical mechanics of complex networks,
{Rev Mod Phys} {74}  (2002),   47--97.

\bibitem{AlJeBa99}
R. Albert, H. Jeong, and A.-L. Barab\'asi,
Diameter of the world wide web,
{Nature} {401}  (1999), 130--131.

\bibitem{BaAl99}
A.-L. Barab\'asi and R. Albert,
Emergence of scaling in random networks,
{Science} {286} (1999), 509--512.

\bibitem{BaOl04}
A.-L. Barab\'asi and Z.N. Oltvai,
Network biology: Understanding the cell's functional organization,
Nature Rev Genetics 5 (2004), 101--113.

\bibitem{BaRaVi01}
A.-L. Barab\'asi, E. Ravasz, and T. Vicsek,
Deterministic scale-free networks,
Physica A 299 (2001)  559--564.



\bibitem{CoFeRa04}
F. Comellas, G. Fertin, and A. Raspaud,
Recursive graphs with small-world scale-free properties,
{Phys Rev E} {69} (2004), 037104.

%

\bibitem{DoGoMe02}
S.N. Dorogovtsev, A.V. Goltsev, and J.F.F. Mendes,
Pseudofractal scale-free web,
{Phys Rev E} {65}  (2002), 066122.

\bibitem{DoMe02}
S.N. Dorogovtsev and J.F.F. Mendes,
Evolution of networks,
{Adv Phys} {51} (2002), 1079--1187.



\bibitem{FaFaFa99}
M. Faloutsos, P. Faloutsos, and  C. Faloutsos,
On power-law relationships of the internet topology.
{Comput Commun Rev} {29} (1999),   251--260.

%

\bibitem{JeToAlOlBa00}
H. Jeong, B. Tombor, R. Albert, Z.N. Oltvai, and A.-L. Barab\'asi,
The large-scale organization of metabolic networks,
{Nature} {407} (2000), 651--654.

\bibitem{JeMaBaOl01}
H. Jeong, S. Mason, A.-L. Barab\'asi, and Z.N. Oltvai,
Lethality and centrality in protein networks,
Nature 411 (2001) 41--42.

\bibitem{JuKiKa02}
S. Jung, S. Kim, and B. Kahng,
Geometric fractal growth model for scale-free networks,
{Phys Rev E} {65} (2002), 056101.

\bibitem{Ne01}
M.E.J. Newman,
The structure of scientific collaboration networks,
{Proc Natl Acad Sci USA} {98} (2001), 404--409.

\bibitem{Ne03}
M.E.J. Newman,
The structure and function of complex networks,
{SIAM Rev} {45} (2003), 167--256.

\bibitem{NeWaSt02}
M.E.J. Newman, D.J. Watts, and S.H. Strogatz. Random graph models of
social networks, {Proc Natl Acad Sci USA} {99} (2002), 2566--2572.

\bibitem{No03}
J.D. Noh,
Exact scaling properties of a hierarchical network model,
Phys Rev E 67  (2003), 045103.

\bibitem{RaBa03}
E. Ravasz and A.-L. Barab\'asi,
Hierarchical organization in complex networks,
Phys Rev E 67 (2003), 026112.

\bibitem{RaSoMoOlBa02}
E. Ravasz, A. L. Somera, D. A. Mongru, Z. N. Oltvai, and A.-L. Barab\'asi,
Hierarchical organization of modularity in metabolic networks,
Science 297 (2002), 1551--1555.

\bibitem{SoHaMa05}
C.M. Song, S. Havlin, and H.A. Makse,
Self-similarity of complex networks
{Nature} {433} (2005), 392--395.

\bibitem{SoVa04}
R.~V. Sol\'e and S.~Valverde,
Information theory of complex networks: on evolution and architectural constraints,
Lecture Notes in Phys  650 (2004), 189--207.




\bibitem{WaSt98}
D.J. Watts and S.H. Strogatz,
Collective dynamics of `small-world' networks,
{Nature} {393} (1998), 440--442.

\bibitem{WuRaBa03}
S. Wuchty, E. Ravasz, and A.-L. Barab\'asi,
``The Architecture of Biological Networks'',
Complex Systems in Biomedicine, T.S. Deisboeck, J. Yasha Kresh and T.B. Kepler (Editors),
Kluwer Academic Publishing, New York, 2003.

\bibitem{ZhCoFeRo06}
Z.Z. Zhang, F. Comellas, G. Fertin, and L.L. Rong,
High dimensional Apollonian networks,
{J Phys A: Math Gen} {39} (2006), 1811--1818 .

\end{thebibliography}

\end{document}